\theoremstyle{plain}
\newtheorem{thm}{\protect\theoremname}
\theoremstyle{plain}
\newtheorem{prop}[thm]{\protect\propositionname}
\theoremstyle{remark}
\newtheorem*{rem*}{\protect\remarkname}
\theoremstyle{plain}
\newtheorem{lem}[thm]{\protect\lemmaname}
\providecommand{\lemmaname}{Lemma}
\providecommand{\propositionname}{Proposition}
\providecommand{\remarkname}{Remark}
\providecommand{\theoremname}{Theorem}
\renewenvironment{align*}{\align}{\endalign}
\renewenvironment{alignat*}{\alignat}{\endalignat}
\renewenvironment{multline*}{\multline}{\endmultline}
\renewenvironment{gather*}{\gather}{\endgather}
\renewenvironment{equation*}{\equation}{\endequation}
\renewcommand{\geq}{\geqslant}
\renewcommand{\leq}{\leqslant}
\renewcommand{\Re}{\operatorname{Re}}
\newcommand{\dd}{{\rm{d}}}
\DeclareMathOperator{\tr}{tr}
\newcolumntype{C}{>{$\displaystyle} c <{$}} 
\DeclareDocumentCommand{\restr}{m m o}  
{%
  \IfNoValueTF{#3}
  {
  \left.\kern-\nulldelimiterspace 
  #1 
  \vphantom{\big|} 
  \right|_{#2}
  }{
  \left.\kern-\nulldelimiterspace 
  #1 
  \vphantom{\big|} 
  \right|_{#2}^{#3} }
}
\newcommand{\PRLsep}{   
           \noindent\makebox[\linewidth]{
                \resizebox{0.7\linewidth}{1pt}{$\blacklozenge$}}\medbreak}
\newcommand\wh[1]{\hstretch{1.42}{\hat{\hstretch{.7}{#1\mkern2mu}}}\mkern-2mu} 
\newcommand\wt[1]{\hstretch{1.42}{\tilde{\hstretch{.7}{#1\mkern2mu}}}\mkern-2mu}
\newcommand\wb[1]{\hstretch{1.42}{\bar{\hstretch{.7}{#1\mkern2mu}}}\mkern-2mu}
\title{\vskip -3em Vortex flows on closed surfaces}
\author{\textsc{A. Bogatskiy}\thanks{Kadanoff Center for Theoretical Physics, University of Chicago, Chicago, IL, USA, 60637}}
\date{\vspace{-5ex}}
\begin{document}

\vskip -2em
\twocolumn[
  \begin{@twocolumnfalse}
    \maketitle
    \begin{abstract}
    	\centering\begin{minipage}{\dimexpr\paperwidth-10cm}
    		We investigate the bulk hydrodynamics of the chiral vortex matter on an arbitrary closed surface, extending the ideas of \cites{Khalatniko89}{WiegmAban14}. Placing this important example of a chiral medium onto a curved geometry reveals the geometric nature of \textit{odd viscosity}. The anomalous odd viscosity of the vortex matter is associated with a special interaction of point vortices with curvature.
    	\end{minipage}
    \end{abstract}
  \end{@twocolumnfalse}
]
\saythanks

\section{Introduction}

There is abundant literature on the $N$-vortex problem of ideal incompressible
2D fluids at fixed $N$ (e.g.\@ see \cite{AreRotTho92} on the integrability of the 3-vortex case), including an extended
search for sophisticated equilibrium solutions consisting of discrete vortices
and/or patches of constant vorticity (see e.g. \cite{CrowdClok02}). Some of the more recent research has been focused on the collective motion of a large number $N$ of vortices in the limit when the ratio $N/V$ of the particle number to the total area remains finite. This limit is especially interesting since the dynamics of $N>3$ vortices  is  believed to be chaotic \cite{Aref83,ArefPomp82}. Especially interesting flows are those consisting of identical sign-like vortices, the \textit{chiral} ``turbulent Euler flow'. This appears to be a rather recent and unexplored subject \cites{CagLiMaPu92}{CheKolZhi13}{WiegmAban14}{BogatWiegm19}.

It is natural to describe the chiral flow, a.k.a.\ \textit{vortex matter}, in the coarse-grained, or hydrodynamic, limit where vortices themselves constitute a new fluid. This is the limit when $N\to\infty$, the vortex number density $\rho$ remains bounded so that it can be replaced by a regular positive function with
$\int\rho\,\dd V=N$, and the gradients of the density are kept small $\vert\nabla\rho/\rho\vert\ll
\sqrt{\rho}$. In this case the velocities $v^\mu_i$ of the discrete vortices can be approximated by a continuous vector field $v^\mu$. Such flows are important in many contexts. A classical example of a chiral vortex flow is the rotating superfluid Helium \cite{Khalatniko89}, but there are others, e.g.\ Onsager's vortex clusters forming after an inverse cascade in confined fluids, the Bose-Einstein condensate,
etc. A coarse-grained description, albeit a heuristic one, of a rotating superfluid was already known in the 1960s \cite{Khalatniko89}. The main claim is that the coarse-grained energy contains a correction to the naive kinetic energy:
\[
E=\int\left[ \frac{u^2}{2}-\frac{\Gamma^2}{8\pi}\rho\ln\rho\right] \dd V,
\]
where $u$ is the ``coarse-grained flow'', i.e.\ the one with vorticity $\omega=\nabla\times u=\Gamma\rho$, and $\Gamma$ is the circulation of each vortex. A complete derivation for a rotating incompressible fluid in the infinite plane was first given in \cite{WiegmAban14}.
This new energy term, despite being a Casimir, determines a peculiar force acting within the vortex matter, the so-called \textit{odd viscosity}. The effect of the odd viscosity can be observed in the velocity of the vortex fluid $v$: while in a continuous flow the vorticity is convected by the flow, in the vortex matter it is not. The vortex flow deviates from the coarse-grained velocity $u$ as
\[
v^\mu=u^\mu-\eta\epsilon^{\mu\nu}\nabla_\nu\ln\rho,\quad \eta=\frac{\Gamma}{8\pi}.\label{WA formula}
\]
Here $\eta$ is the odd viscosity coefficient. The anomalous term originates from the discreteness of vortices: the velocity of the flow is infinite at the position of the vortex, so it is not clear what it means for vortices to be convected by the flow. The anomalous term reflects the regularization of the singularities in the vortex cores.

The anomalous force described above has a geometric nature. For this reason it is important to extend the results obtained for unbounded fluids to confined geometries where the vortex matter interacts with boundaries and with local geometry (like curvature). In this paper we develop a framework for this task and apply it to the vortex matter on closed surfaces. The main question is: how does odd viscosity manifest itself on curved surfaces, and how is (\ref{WA formula}) generalized? The answers are in the coarse-grained velocity (\ref{eq: v in terms of u (m)}) and energy (\ref{eq: coarse-grained H(m)}) of the vortex fluid. These also imply the  \textit{mean-field equation} (\ref{eq: Liouville-type
equation}) for stationary distributions of vortices. We follow a strategy vaguely similar to that of \cite{WiegmAban14}: we compute the stress tensor of the vortex matter and use it to coarse-grain the dynamics.

We now review vortices in the infinite plane. There vortices of strengths $\Gamma_{i}$ follow Kirchhoff's equations \cite{Kirchhoff76}. In complex coordinates $z_{j}=x_{j}+iy_{j}$ they read
\begin{equation}
i\dot{\bar{z}}_{i}=\frac{1}{2\pi}\sum_{j\neq i}\frac{\Gamma_{j}}{z_{i}-z_{j}}.\label{eq: Kirchhoff}
\end{equation}
 We are interested in the chiral case, when all vortices are the same,
$\Gamma_{j}=\Gamma$. In this case the total angular impulse of vortices
$L=\frac{i}{2}\sum_{i}\left(z_{i}\dot{\bar{z}}_{i}-\dot{z}_{i}\bar{z}_{i}\right)$ is conserved \cite{Saffman92}. At fixed $L$ and large $N$ the minimum energy configuration is a circular droplet of vortex matter of asymptotically uniform density in the bulk.
 
For smooth flows the Euler equation is minimally coupled to a metric. Defining point vortices as weak solutions of the Euler equation requires a resolution of singularities at the vortex cores. On curved or bounded domains this is especially difficult due to the absence of translational symmetry. While mathematical results in this context are limited \cite{Turkington87}, all existing methods lead to the classic Kirchhoff-Routh equations that were established by 1941 \cites{Lin41}{Lin41a}. A review can be found in \cite{FluchGusta97}. The Kirchhoff-Routh equations generalized to closed surfaces are the starting point of our work.

The Kirchhoff-Routh equations, originally written for fluids in planar domains, can be extended to curved surfaces as follows. If the infinite plane is deformed so that its metric becomes $\sqrt{g}\mathrm{d}z\mathrm{d}\bar{z}$, the Kirchhof-Routh equations
are  \begin{equation}
\sqrt{g\left(z_{j}\right)}i\dot{\bar{z}}_{j}=\frac{1}{2\pi}\sum_{k\neq j}\frac{\Gamma_{k}}{z_{j}-z_{k}}+\frac{\Gamma_{j}}{4\pi}\partial_{z_{j}}\ln\sqrt{g\left(z_{j}\right)}\label{eq:
Hally}
\end{equation}
(they were obtained heuristically in \cite{Hally80} using the ideas
of Routh \cite{Routh81}). The last term is the effect of the regularized self-energy of the vortex core, which can be interpreted as the condition that the area
of the vortex core be independent of its position on the surface. These equations have since been generalized to arbitrary surfaces \cite{BoattKoill08}, and we review them again below.

Vortex matter on surfaces with boundaries develops a separate boundary with sophisticated dynamics \cite{BogatWiegm19}. To avoid the complications caused by the boundary we focus on vortices on closed surfaces.


The paper is organized as follows: after introducing the necessary notation we define point vortex solutions on arbitrary closed surfaces. Next, to emphasize the geometric nature of the problem we introduce a new geometric force which characterizes a non-minimal coupling of the Euler and the Kirchhoff equations to the metric. Then we compute the stress of the vortex matter by varying the Kirchhoff-Routh energy w.r.t.\ the metric and relate it to the odd viscosity. We notice that the coarse-graining problem becomes trivial for the stress tensor. Taking the coarse-grained limit of the stress we then obtain the coarse-grained energy. This yields the hydrodynamics of the vortex matter. In the Appendices, we provide more information about the general behavior of vortices on surfaces, and give detailed derivations of the main claims of the paper.

\section{Background\label{subsec:Notation}}

\subsection{Green functions on closed surfaces}

Let $\Sigma$ be a closed surface of \textit{genus $\mathsf{g}$}
with a Riemannian metric $g_{\mu\nu}$. Denote the corresponding volume
element by $\mathrm{d}V\left(z\right)=\sqrt{g(z)}\mathrm{d}^{2}z=\frac{i}{2}\sqrt{g}\mathrm{d}z\wedge\mathrm{d}\bar{z}$
and the total area of $\Sigma$ by $V$. Let $G_{w}\left(z\right)=G\left(z,w\right)$
be the Green function of the (positive) Laplace-Beltrami operator
$-\Delta$ defined \cite{Aubin82} by
\begin{equation}
\begin{cases}
-\Delta G_{w}\left(z\right)=\delta_{w}\left(z\right)-\frac{1}{V},\\
\intop_{\Sigma}G_{w}\mathrm{d}V=0.
\end{cases}
\end{equation}
We also write $\left(-\Delta\right)^{-1}f\left(z\right)=\int G\left(z,w\right)f(w)\mathrm{d}V\left(w\right)$.
It is known that $G\left(z,w\right)=G\left(w,z\right)$ and $G\left(z,w\right)=-\frac{1}{2\pi}\ln d\left(z,w\right)+\mathcal{O}(1)$
as $z\to w$, where $d\left(z,w\right)$ is the geodesic distance.
The \textit{Robin function} of the surface is defined by a regularization
of the Green function:
\begin{equation}
G^{R}\left(x\right)=\lim_{y\to x}\left(G\left(x,y\right)+\frac{1}{2\pi}\ln d\left(x,y\right)\right).\label{eq: robin def}
\end{equation}
For the main properties of Robin functions and their applications see \cite{Flucher99}.

\subsection{Euler equation on surfaces}

An ideal incompressible flow on $\Sigma$ is described by a divergence-free
velocity field $u^{\mu}$, $\nabla_{\mu}u^{\mu}=0$. Such a field admits a unique $L^2$-orthogonal decomposition of the form 
\[
u_\mu =\nabla^\ast_\mu\psi+\text{curl-free}
\]
(we use asterisks to denote the clockwise 90-degree rotation: $a^{\ast\mu}\coloneqq \tensor{\epsilon}{^\mu_\nu}a^\nu$),
where $\psi$ is the \textit{stream function}. The curl-free (and harmonic) part of the flow is present only on multiply connected surfaces, and we will suppress it until Sec.\ \ref{sec: nonzero genus}.

The vorticity of the flow is $\omega=\epsilon^{\mu\nu}\nabla_{\mu}u_{\nu}=-\Delta\psi$. On a closed surface, any incompressible flow has zero total vorticity, $\intop_{\Sigma}\omega\dd V=0$.

We shall also allow the fluid to be electrically charged and placed in a constant uniform magnetic field $B$ orthogonal to the surface. Then for a \textit{smooth flow}, the total energy is
simply the Dirichlet functional of $\psi$, constituting a minimal
coupling of the Euler hydrodynamics to a metric: 
\begin{equation}
H=\frac{1}{2}\int_\Sigma u^2\dd V=\frac{1}{2}\int_\Sigma \lvert\nabla\psi\rvert^2\dd V,\label{eq: Dirichlet energy}
\end{equation}
and the dynamics is governed by the Euler equation
\begin{equation}
\dot{u}^{\mu}+u^{\nu}\nabla_{\nu}u^{\mu}=Bu^{\ast\mu}-\nabla^{\mu}p,
\end{equation}
which is also conveniently written as 
\begin{equation}
\dot{u}^{\mu}-(\omega+B) u^{\ast\mu}+\nabla^\mu\left(\frac{u^2}{2}+p\right)=0.\label{omega-Euler}
\end{equation}
On a sphere (and all simply connected domains) the set of solutions is independent of $B$. The Euler equation for incompressible fluids implies the Helhmholtz vorticity equation:
\begin{equation}
\dot{\omega}+u^{\mu}\nabla_{\mu}\omega=0,\quad u^\mu=\nabla^{\ast\mu}(-\Delta)^{-1}\omega+\text{curl-free}.
\end{equation}
In general, the Helmholtz equation does not imply the Euler equation and must be accompanied by the equations of motion of the curl-free harmonic part of $u^\mu$, which we discuss later in Sec.\ \ref{sec: transport}.

\subsection{Vortices on closed surfaces}

A point vortex is usually defined as a flow whose vorticity is localized at one point. However, on a closed surface we must have $\int \omega\,\dd V=0$. This means that a consistent definition of a point vortex must include a compensating negative vorticity. This background vorticity becomes non-dynamical if we choose it to be a constant, i.e. the vorticity of a vortex of strength $\Gamma$ at point $z$ is $\Gamma (\delta_{z}-1/V)$. Therefore we define the (chiral) flow of $N$ point vortices of equal vorticities $\Gamma$ at points $\left\{ z_{i}\right\} _{i=1}^{N}$
by the equations
\begin{equation}
\omega=\Gamma\left(\rho-\frac{N}{V}\right),\,\rho(z)=\sum_{i=1}^{N}\delta_{z_{i}}\left(z\right),\,\psi(z)=\Gamma\sum_{i=1}^{N}G_{z_i}(z).
\end{equation}
We use symbols $x,y,z$ to denote points of the surface and their local complex coordinates at the same time, hopefully not causing any confusion. 

In the presence of point vortices, the energy (\ref{eq: Dirichlet energy})
becomes infinite, so a new form of the Hamiltonian is needed.
The most natural, from the geometric point of view, resolution of
this singularity is to regularize the infinite self-energy of a vortex
$\omega\psi\sim\Gamma^2\delta_{z_{0}}\cdot G_{z_{0}}$ as in the definition
of the Robin function. Thus the vortex-dependent part of the Hamiltonian, also known as the \textit{Kirchhoff-Routh path function} \cite{Lin41,Lin41a} of the $N$-vortex flow as described above is postulated as
\begin{equation}
H_N=\frac{\Gamma^{2}}{2}\sum_{k}\sum_{j\neq k}G\left(z_{j},z_{k}\right)+\frac{\Gamma^{2}}{2}\sum_{j}G^{R}\left(z_{j}\right).\label{eq:hamiltonian}
\end{equation}
Physically this choice of regularization is equivalent to replacing each point vortex by a disk of a fixed (geodesic) radius $\varepsilon$ of uniform vorticity inside and
subtracting out the large self-energy of the disk $\propto\ln\varepsilon$
(see the standard derivation in flat space e.g. in \cite[\S7.3]{Saffman92}).
In particular, the area of the core of each vortex must be independent
of its position on the surface.

It is known \cite{BoattKoill08} that regardless of the regularization, the result satisfies Kimura's conjecture \cite{Kimura99} that vortex dipoles move along geodesics. Some of the effects
of the Robin function on the dynamics of vortices were explored in
\cites{VitelNels04}{TurVitNel10}. It has also been employed in the
formulation of the quantum Hall effect on surfaces \cite{LasCanWie15}.


\subsection{Geometric forces}

We assume that all vorticity fluxes are quantized in units of $\Gamma$, which is the case in e.g.\ superfluids. Apart from the optional magnetic field, this is the only energy scale in the problem. Using this scale, we introduce an additional Lorentz-like force into the Euler equation (\ref{omega-Euler}) which is the simplest non-minimal coupling of the Euler equation to a metric and will be important to us in the discussion of odd viscosity and other geometric properties of vortices:
\begin{equation}
\dot{u}^{\mu}-\omega u^{\ast\mu}+\nabla^\mu\left(\frac{u^2}{2}+p\right)=\left(B+\frac{m\Gamma}{4\pi}\left(R-\wb{R}\right)\right) u^{\ast\mu},\label{omega-Euler-R}
\end{equation}
where by the Gauss-Bonnet theorem the average value $\wb{R}$ of the scalar curvature $R$ (twice the Gaussian curvature) of the surface equals
\[
\wb{R}=\frac{1}{V}\int R\dd V=\frac{4\pi\chi}{V}.
\]
The dimensionless parameter $m$, interpreted as the geometric spin of the fluid, takes integer or half-integer values (see Appendix \ref{sec: Appendix gen}). The corresponding Helmholtz equation is
\begin{equation}
\dot{\omega}+u^{\mu}\nabla_{\mu}\left(\omega+\frac{m\Gamma}{4\pi}R\right)=0.\label{Helmholtz-R}
\end{equation}
In other words, the quantity $\omega+\frac{m\Gamma}{4\pi}R$ is convected by the flow instead of the vorticity itself.

In this system the definition of an $N$-vortex flow has to be corrected:
\begin{gather}
    \omega=\epsilon^{\mu\nu}\nabla_\mu u_\nu=-\Delta\psi=-\frac{m\Gamma}{4\pi}\left(R-\wb{R}\right)+\Gamma\left(\rho-\frac{N}{V}\right),\\
    \rho=\sum_{i=1}^N \delta_{z_i},\quad \psi=\Gamma\sum_{i=1}^{N}G_{z_{i}}-m\Gamma U,\label{N-vortex def with R}
\end{gather}
where 
\[
U=(-\Delta)^{-1}\frac{R}{4\pi}\label{eq: def of curvature potential}
\] is the curvature potential.

The corrected vortex Hamiltonian is
\begin{equation}
H_N=\frac{\Gamma^{2}}{2}\sum_{j\neq k}G\left(z_{j},z_{k}\right)+\frac{\Gamma^{2}}{2}\sum_{j}\left(G^{R}\left(z_{j}\right)-2m U\left(z_{j}\right)\right).\label{eq: discrete Hamiltonain (m)}
\end{equation}
The curvature potential happens to coincide with the Robin function up to a constant on surfaces of genus $0$, so in that case we have 
\begin{equation}
H_N^{\sf{g}=0}=\frac{\Gamma^{2}}{2}\sum_{j\neq k}G\left(z_{j},z_{k}\right)+\frac{\Gamma^{2}}{2}(1-2m)\sum_{j}U\left(z_{j}\right).\label{eq: Hamiltonian (m) on sphere}
\end{equation}
A single vortex then moves with velocity $v$ along a level line of $U$ enclosing an area $D$ in such a way that $\oint v^2\dd t\propto (2m-1)\int_D (R-\wb{R})\dd V$. Thus we can call $s=m-1/2$ the \emph{spin of a vortex}. On higher genus surfaces, the Robin function differs from a curvature potential only by an extra term dependent only on the moduli of the underlying Riemann surface, hence this analogy is still meaningful.

The Kirchhoff equations express the velocities $v^\mu_i$ of the vortices in terms of this Hamiltonian:
\[v_{i}^{\mu}=\Gamma^{-1}\nabla^{\ast\mu}_{z_{i}}H_N.\label{Kirchhoff covariant}\]

Again, the harmonic part of $u^\mu$ evaluated at $z_i$ has to be included on the r.h.s., but for now we ignore it.


\section{Results}

\subsection{Stress tensor}

This is the central section of the present work. Here we introduce
new objects necessary for the coarse-graining procedure and execute
the procedure itself.

The coarse-grained limit is the limit when $N\to\infty$ and the vortex
density $\rho$ can be replaced by a regular positive function with
$\int\rho\mathrm{d}V=N$. To be able to coarse-grain an object like
the energy or the velocity of vortices, we need to obtain expressions
for them that do not refer to individual vortices, and instead involve
only $\rho$ and its functionals. Since the explicit Hamiltonian and
the Kirchhoff equations include summations over $i\neq j$, there
is no way to coarse-grain these objects directly. This is why we first
introduce a new, less singular object which we call the vortex stress tensor.

Denoting by $\rho=\sum_{i}\delta_{z_{i}}$ the
number density of vortices, we define the number current of vortices
as
\begin{equation}
\Gamma\rho v^{\mu}=\sum_{i}\delta_{z_{i}}v_{i}^{\mu}.
\end{equation}
Now, if $H_N$ is understood as a functional of the metric $g_{\mu\nu}$,
we can employ a very general exact Ward-type identity (\ref{eq: Ward})
to rewrite 
\begin{equation}
-\Gamma\rho v^{\ast}_\nu=-\Gamma\sum_{i}\delta_{z_{i}}\epsilon_{\nu\mu} v_{i}^{\mu}=\sum_{i}\delta_{z_{i}}\nabla_{z_{i},\nu}H_N=2\nabla^{\mu}\frac{\delta H_N}{\delta g^{\mu\nu}}.\label{eq: ward for H}
\end{equation}
This motivates us to introduce a new object called the \textit{vortex
stress tensor}, defined as
\begin{equation}
P_{\mu\nu}(z)=2\frac{\delta H_N}{\delta g^{\mu\nu}(z)}\label{eq: P definition}
\end{equation}
(variation at fixed $\left\{z_{i}\right\}$ and $\Gamma$). We will see that $P_{\mu\nu}$ can be coarse-grained trivially, thus becoming a well-defined functional of $\rho$. Furthermore, if the
energy is to be successfully coarse-grained as a functional of $\rho$, denoted $H_{\rm CG}[\rho]$,
it must satisfy the continuum analog of the last equality in (\ref{eq: ward for H}):
\begin{equation}
\rho\nabla_{\mu}\frac{\delta H_{CG}}{\delta\rho}=\nabla^{\nu}P_{\nu\mu}\label{eq: continuous Ward}
\end{equation}
(see Appendix \ref{sec: Appendix g=0} for a proof). Now (\ref{eq: continuous Ward}) immediately gives the coarse-grained
Hamiltonian $H_{\rm CG}$ and (\ref{eq: ward for H}) gives the velocity vector
field of the coarse-grained flow of vortices:
\begin{equation}
v^{\mu}=\nabla^\ast_{\nu}\frac{\delta H_{\rm CG}}{\delta\rho}. \label{coarse-grained v in terms of H}
\end{equation}

Now we restate the last few relations in a compact form and give an
explicit expression for $P_{\mu\nu}$ as a functional of $\rho$. 

\begin{prop}
    \label{prop: Stress}For any number $N$ of vortices of the same vorticity
    $\Gamma$ with the Hamiltonian (\ref{eq: discrete Hamiltonain (m)}),
    located at points $\left\{ z_{i}\right\} _{i=1}^{N}$ of a closed
    surface of genus $\mathsf{g}$ with a Riemannian metric $g_{\mu\nu}$,
    their equations of motion (Kirchhoff equations) (\ref{Kirchhoff covariant}) can be written as
    \begin{equation}
        \Gamma\rho v^\ast_\mu=-\nabla^{\nu}P_{\nu\mu},\label{eq: divergence of P}
    \end{equation}
    where $\rho=\sum_{i}\delta_{z_{i}}$ is the number density of vortices,
    $\rho v^{\mu}=\sum_{i}\delta_{z_{i}}v_{i}^{\mu}$ is the number
    current of vortices and $P_{\mu\nu}$ is the symmetric ``vortex stress''
    tensor defined in (\ref{eq: P definition}). In this equation we have ignored the possible global circulations along the cycles of the surface, which we elaborate on in Sec.\ \ref{sec: nonzero genus}. The stress tensor can be decomposed as
    \begin{equation}
        P_{\mu\nu}=P^{C}_{\mu\nu}+P^A_{\mu\nu}+2mP^S_{\mu\nu}+o_{\mu\nu},\label{eq: stress P}
    \end{equation}
    where the last term is an irrelevant divergenceless tensor (moreover it is flow-independent, i.e.\ independent of the $z_i$'s, at least on surfaces of genus 0 and 1). The first term has the form of the naive ``classical'' stress expected for a continuous vorticity distribution,
    \[
        P^C_{\mu\nu}=\left[-\nabla_\mu\psi\nabla_\nu\psi+\frac{1}{2}g_{\mu\nu}\lvert\nabla\psi\rvert^2\right]+g_{\mu\nu}\frac{(N-m\chi)\Gamma}{V}\psi.
    \]
    $P^S_{\mu\nu}$ is the stress corresponding to the Lorentz-like spin force
    \[
        P^S_{\mu\nu}=\tau_{\mu\nu}+g_{\mu\nu}\frac{\Gamma^{2}}{8\pi}\rho,
    \]
    where 
    \begin{equation}
        \tau_{\mu\nu}=\frac{\Gamma}{4\pi}\left[\nabla_{\mu}\nabla_\nu\psi-\frac{1}{2}g_{\mu\nu}\Delta\psi\right]-g_{\mu\nu}\frac{m\Gamma^2}{32\pi^2}R.\label{eq: stress tensor T}
    \end{equation}
    $P^S_{\mu\nu}$ is divergenceless wherever curvature is zero. Finally, $P^A_{\mu\nu}$ is the remaining \emph{anomalous stress}
    \[
        P^A_{\mu\nu}=-g_{\mu\nu}\frac{\Gamma^{2}}{8\pi}\rho.
    \]
    In these formulas, $\psi$ is the stream function (\ref{N-vortex def with R}) for the $N$-vortex flow.
    
    The second order pole type singularities at $z_i$ in $P^C_{\mu\nu}$ need to be understood in a specific distributional sense, see Appendix \ref{sec: Appendix g=0}.

    For additional remarks on the case $\mathsf{g}\geq1$ see Sec.\ \ref{sec: nonzero genus} and Appendix \ref{sec: Appendix g=1}.
\end{prop}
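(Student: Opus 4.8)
\emph{The divergence relation} (\ref{eq: divergence of P}) is immediate and was essentially set up in the text above: it is the identity (\ref{eq: ward for H}), which combines the diffeomorphism Ward identity (\ref{eq: Ward}) --- valid because $H_N[g,\{z_i\}]$ is a coordinate scalar --- with the Kirchhoff equations (\ref{Kirchhoff covariant}). Concretely, lowering the index and rotating once more (using $\epsilon_{\mu\lambda}\tensor{\epsilon}{^\lambda_\nu}=-g_{\mu\nu}$) turns $v_i^\mu=\Gamma^{-1}\nabla^{\ast\mu}_{z_i}H_N$ into $\nabla_{z_i,\nu}H_N=-\Gamma v^{\ast}_{i,\nu}$; multiplying by $\delta_{z_i}$, summing, and using (\ref{eq: Ward}) in the form $\sum_i\delta_{z_i}\nabla_{z_i,\nu}H_N=2\nabla^{\mu}(\delta H_N/\delta g^{\mu\nu})=\nabla^{\mu}P_{\mu\nu}$ gives (\ref{eq: divergence of P}). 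Everything else is the explicit evaluation of $P_{\mu\nu}=2\,\delta H_N/\delta g^{\mu\nu}$.

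\emph{Recasting $H_N$.} Termwise variation of the $i\neq j$ sums in (\ref{eq: discrete Hamiltonain (m)}) is useless, so I would first put $H_N$ in bilinear form. With $\psi=(-\Delta)^{-1}\omega$ and $\omega=\Gamma(\rho-N/V)-\frac{m\Gamma}{4\pi}(R-\wb{R})$ as in (\ref{N-vortex def with R}), and using $\int_\Sigma\omega\,\dd V=0$, a short computation shows that $H_N$ equals the \emph{regularized} energy $\tfrac12\int_\Sigma\psi\omega\,\dd V$, in which each divergent self-value $\psi(z_i)$ is replaced by its Robin-regularized value, minus the flow-independent metric functional $\tfrac{m^{2}\Gamma^{2}}{2}\int_\Sigma|\nabla U|^{2}\,\dd V$. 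Then, using self-adjointness of $(-\Delta)^{-1}$, $\delta H_N=\tfrac12\iint\omega(z)\omega(w)\,\delta G(z,w)\,\dd V(z)\,\dd V(w)+\int_\Sigma\psi\,\delta(\sqrt{g}\,\omega)\,\dd^{2}z$ plus the (flow-independent) variation of the last term.

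\emph{Substituting the variation formulas.} Into this I would feed the standard identities. (i) $\delta G(z,w)=-\int_\Sigma\bigl(\nabla_\mu G_z\nabla_\nu G_w-\tfrac12 g_{\mu\nu}\nabla G_z\!\cdot\!\nabla G_w\bigr)\delta g^{\mu\nu}\,\dd V$ up to a $g_{\mu\nu}$-correction proportional to $(G_z+G_w)/V$ which is annihilated by the contraction with $\omega$ (because $\int_\Sigma\omega\,\dd V=0$); since $\int_\Sigma\omega(x)\,\nabla G_x\,\dd V=\nabla\psi$, the double integral collapses to the classical field stress $-\nabla_\mu\psi\nabla_\nu\psi+\tfrac12 g_{\mu\nu}|\nabla\psi|^{2}$, i.e.\ the bracket in $P^{C}_{\mu\nu}$. (ii) With $\delta\sqrt{g}=-\tfrac12\sqrt{g}\,g_{\mu\nu}\delta g^{\mu\nu}$ and $\delta^{(2)}_{z_i}$ metric-independent, the constant background $-\Gamma N/V$ and, via $\wb{R}=4\pi\chi/V$, the $\wb{R}$ background together produce precisely the $g_{\mu\nu}\frac{(N-m\chi)\Gamma}{V}\psi$ term of $P^{C}_{\mu\nu}$. (iii) In two dimensions $\delta(R\sqrt{g})$ is a pure total derivative (the variational form of Gauss--Bonnet, the Einstein tensor vanishing), so two integrations by parts move the curvature variation onto $\psi$ and yield the Hessian structure $\tfrac{\Gamma}{4\pi}(\nabla_\mu\nabla_\nu\psi-g_{\mu\nu}\Delta\psi)$; rewriting $\Delta\psi=-\omega$ then distributes it over $\nabla_\mu\nabla_\nu\psi$, $g_{\mu\nu}R$ and $g_{\mu\nu}\rho$, which together with the metric-term variation assemble $2m\tau_{\mu\nu}$ of (\ref{eq: stress tensor T}) and the $g_{\mu\nu}\rho$ pieces of $P^{S}$ and $P^{A}$. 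Whatever is left over is a symmetric divergenceless tensor, which I call $o_{\mu\nu}$; on surfaces of genus $0$ and $1$ the space of symmetric divergenceless $2$-tensors (holomorphic quadratic differentials together with multiples of $g_{\mu\nu}$) is finite-dimensional and fixed by the conformal class, hence independent of the $z_i$'s, so $o_{\mu\nu}$ is flow-independent there.

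\emph{The main obstacle} is the behaviour at the vortex cores. Near $z_i$, $\psi\sim-\tfrac{\Gamma}{2\pi}\ln d(z,z_i)$, so $\nabla_\mu\psi\nabla_\nu\psi\sim\tfrac{\Gamma^{2}}{4\pi^{2}}(z-z_i)_\mu(z-z_i)_\nu/|z-z_i|^{4}$ is not locally integrable and its distributional extension carries a $g_{\mu\nu}$-type $\delta$-ambiguity at each $z_i$; likewise $\delta G^{R}(z_i)$ is the Robin-\emph{regularized} $\delta G(z_i,z_i)$, not a naive limit. Carrying the $\tfrac1{2\pi}\ln d$ subtraction of the Robin function (equivalently, the fixed-area vortex-core cutoff that defines $H_N$) through the variation is precisely what fixes this ambiguity and dictates how the resulting $\delta$-function stresses split among $P^{C}_{\mu\nu}$, $P^{A}_{\mu\nu}$ and the $g_{\mu\nu}\rho$ term of $P^{S}_{\mu\nu}$; this is the delicate part, deferred to Appendix~\ref{sec: Appendix g=0}. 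The harmonic/moduli contributions and the status of $o_{\mu\nu}$ for $\mathsf g\geq1$ are handled separately (Appendix~\ref{sec: Appendix g=1}), which is why those cases are only partially settled in the main text.
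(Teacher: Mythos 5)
Your overall route is the paper's: establish (\ref{eq: divergence of P}) from the diffeomorphism Ward identity (\ref{eq: Ward}) and then evaluate $P_{\mu\nu}=2\,\delta H_N/\delta g^{\mu\nu}$ using the Hadamard formula (\ref{eq: G(x,y) variation}) and the standard variation of $R\sqrt g$; your recasting $H_N=\tfrac12\int\psi\omega\,\dd V-\tfrac{m^2\Gamma^2}{2}\int|\nabla U|^2\dd V$ (Robin-regularized on the diagonal) is correct and is only a bookkeeping variant of the paper's term-by-term variation of $G$, $G^R$ and $U$. The genuine gap is that the step you defer --- ``carrying the $\tfrac1{2\pi}\ln d$ subtraction through the variation'' --- is not a technicality but the entire quantitative content of the proposition: it is the only place the coefficients $-\tfrac{\Gamma^2}{8\pi}$ in $P^A_{\mu\nu}$ and $\tfrac{\Gamma}{4\pi}$ in $\tau_{\mu\nu}$ (i.e.\ the odd viscosity) can come from, and your sketch never produces them. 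Two computations are missing. (a) The conformal variation of the Robin function: since $d(x,y)\to e^{\sigma(x)}d(x,y)$ to leading order under $g\to e^{2\sigma}g$, definition (\ref{eq: robin def}) gives $\delta G^R(x)\ni\sigma(x)/2\pi$, i.e.\ the $\tfrac{1}{4\pi}\delta_x(z)$ term of (\ref{eq: trace variation of G^R general}); multiplied by $\tfrac{\Gamma^2}{2}$ and summed over vortices this \emph{is} $P^A_{\mu\nu}=-g_{\mu\nu}\tfrac{\Gamma^2}{8\pi}\rho$. (b) The traceless part of $\delta G^R(x)/\delta g^{\mu\nu}$ is pinned down by the Ward identity only once one knows that $\Phi_{x,\mu\nu}=-\left[\nabla_\mu G_x\nabla_\nu G_x-\tr\right]+\tfrac{1}{4\pi}\left[\nabla_\mu\nabla_\nu G_x-\tr\right]$ is locally integrable, which rests on the pointwise cancellation $-\bigl(\tfrac{1}{2\pi}\partial\ln|z-x|\bigr)^2-\tfrac{1}{4\pi}\partial^2\bigl(\tfrac{1}{2\pi}\ln|z-x|\bigr)=0$. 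That identity simultaneously fixes the distributional meaning of $\nabla_\mu\psi\nabla_\nu\psi$ at the cores (your ``ambiguity'') and forces the relative coefficient $\tfrac{1}{4\pi}$ of the Hessian term that assembles into $\tau_{\mu\nu}$. Without (a) and (b) the split of the $\delta$-function stresses among $P^C$, $P^A$ and $P^S$ that you invoke is undetermined.

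A second, smaller flaw: your claim that $o_{\mu\nu}$ is flow-independent on genus $1$ because the space of holomorphic quadratic differentials is finite-dimensional and fixed by the conformal class is a non sequitur --- the residual $h_{\mu\nu}$ in $\delta G^R(x)/\delta g^{\mu\nu}$ lives in that one-dimensional space, but its coefficient could a priori depend on $x$. The paper excludes this on the flat torus by an explicit computation using translation invariance (Appendix \ref{sec: Appendix g=1}) and deliberately leaves the question open for $\mathsf{g}\geq 2$; your argument would ``prove'' it for all genera, which should be a warning sign.
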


We call $P^A_{\mu\nu}$ the anomalous stress because it is associated with the ``excluded volume'' of vortex cores, enforced by the summation rule $j\neq k$ in the pair interaction (\ref{eq: discrete Hamiltonain (m)}). More concretely, it comes from the self-energy of the vortex cores represented by the Robin function. Unlike $P^C_{\mu\nu}$, which is of order $N^2$, the anomalous term is of order $N$.

Nevertheless, the notion of the anomalous stress is ambiguous, as $P^A_{\mu\nu}$ is not the only contribution of the Robin function to the stress. This ambiguity is related to the discrepancy between the spins of the original fluid and of the vortices in it. Namely, using the identity $P^A_{\mu\nu}=\tau_{\mu\nu}-P^S_{\mu\nu}$, we can rewrite (\ref{eq: stress P}) as
\[
   \mathsf{g}=0:\quad P_{\mu\nu}=P^C_{\mu\nu}+\tau_{\mu\nu}+(2m-1)P^S_{\mu\nu}+o_{\mu\nu}.\label{eq: alternative anomalous stress}
\]
We recognize the third term as the spin of vortices, corresponding to the last term in the vortex Hamiltonian (\ref{eq: Hamiltonian (m) on sphere}). We can therefore interpret $\tau_{\mu\nu}$ as the alternative anomalous stress, first introduced as the traceless part of the anomalous momentum flux tensor in \cite{WiegmAban14}. While $P^A_{\mu\nu}$ is the stress that appears anomalous from the point of view of the Euler equation, $\tau_{\mu\nu}$ is the stress anomalous to the Kirchhoff equations. It is $\tau_{\mu\nu}$ that is sometimes called \emph{odd stress}.

\subsection{Coarse-grained vortex flow}
We have expressed $P_{\mu\nu}$ solely in terms of the vortex density $\rho$. The possibility of doing so is a remarkable property of the logarithmic pair interaction of vortices. This form of the tensor allows for an easy passage to the coarse-grained limit: it is natural to assume that the formula for $P_{\mu\nu}$ in terms of $\rho$ remains unchanged after coarse-graining. In particular, the formula for the stream function always remains $\psi=\Gamma(-\Delta)^{-1}\rho-m\Gamma U$.

With this assumption, formulas (\ref{eq: continuous Ward}-\ref{coarse-grained v in terms of H}) let us find the anomalous difference between the vortex flow $v^\mu$ and the flow of the original fluid $u^\mu$.

\begin{prop}[Coarse-grained flow]
    \label{prop: Coarse-grained flow} In the limit $N\to\infty$ on a closed surface of any genus, if the vortex density $\rho=\sum_{i=1}^{N}\delta_{z_{i}}$ approximates a continuous (``coarse-grained'') distribution $\rho$ with $\int\rho\mathrm{d}V=N$, the coarse-grained vortex flow $v^{\mu}$ is incompressible and deviates from the ``naive'' flow $u^{\mu}$ with the continuous stream function $\psi=\Gamma(-\Delta)^{-1}\rho-m\Gamma U$ as follows:
    \begin{equation}
        v^{\mu}=u^{\mu}-\frac{\Gamma}{8\pi}\nabla^{\ast\mu}\ln\rho.\label{eq: v in terms of u (m)}
    \end{equation}
    Stationary solutions ($v^{\mu}=\dot{u}^{\mu}\equiv0$) with nowhere vanishing $\rho$ must satisfy $\psi=\frac{\Gamma}{8\pi}\ln\rho+\mathrm{const}$, equivalent to the Liouville-type ``mean field'' equation on $\rho$:
    \begin{equation}
        \rho+\frac{1}{8\pi}\Delta\ln\rho=\frac{m}{4\pi}R+\frac{k}{V},\quad k=N-m\chi.\label{eq: Liouville-type equation}
    \end{equation}
    In the large-$N$ limit it has a solution that is close to uniform density and has the following gradient expansion in terms of curvature:
    \[\rho=\frac{k}{V}+\frac{m}{4\pi}\left(R-\frac{V}{k}\frac{1}{8\pi}\Delta R+\mathcal{O}(k^{-2})\right).\]
\end{prop}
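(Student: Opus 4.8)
The plan is to leverage the explicit form of the vortex stress tensor from Proposition~\ref{prop: Stress} together with the coarse-graining assumption stated just above: namely, that the decomposition (\ref{eq: stress P}) of $P_{\mu\nu}$ in terms of $\rho$ and $\psi=\Gamma(-\Delta)^{-1}\rho-m\Gamma U$ survives the continuum limit verbatim. After coarse-graining $\rho$ is a smooth positive function, so the distributional caveats of Proposition~\ref{prop: Stress} evaporate and the entire content of (\ref{eq: v in terms of u (m)}) reduces to evaluating $\nabla^{\nu}P_{\nu\mu}$ for smooth $\rho$ and feeding the result into the equation of motion $\Gamma\rho v^{\ast}_{\mu}=-\nabla^{\nu}P_{\nu\mu}$ from (\ref{eq: divergence of P}).

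The computation proceeds term by term on (\ref{eq: stress P}). The tensor $o_{\mu\nu}$ is divergenceless (true in every genus), so it drops. In $\nabla^{\nu}P^{C}_{\nu\mu}$ the cross terms $\nabla_{\nu}\psi\,\nabla^{\nu}\nabla_{\mu}\psi$ cancel against $\tfrac12\nabla_{\mu}\lvert\nabla\psi\rvert^{2}$ by symmetry of the Hessian, leaving $\nabla^{\nu}P^{C}_{\nu\mu}=\bigl(\omega+\tfrac{(N-m\chi)\Gamma}{V}\bigr)\nabla_{\mu}\psi$, which by $\omega=-\tfrac{m\Gamma}{4\pi}(R-\wb R)+\Gamma(\rho-N/V)$ and $\wb R=4\pi\chi/V$ simplifies to $(\Gamma\rho-\tfrac{m\Gamma}{4\pi}R)\nabla_{\mu}\psi$. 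For $\tau_{\mu\nu}$ I would use the two-dimensional Bochner identity $\nabla^{\nu}\nabla_{\nu}\nabla_{\mu}\psi=\nabla_{\mu}\Delta\psi+\tfrac12 R\,\nabla_{\mu}\psi$ (Ricci curvature $=\tfrac12 R g_{\mu\nu}$) and $\Delta\psi=-\omega$, obtaining $\nabla^{\nu}\tau_{\nu\mu}=-\tfrac{\Gamma^{2}}{8\pi}\nabla_{\mu}\rho+\tfrac{\Gamma}{8\pi}R\,\nabla_{\mu}\psi$; hence $\nabla^{\nu}P^{S}_{\nu\mu}=\tfrac{\Gamma}{8\pi}R\,\nabla_{\mu}\psi$ and $\nabla^{\nu}P^{A}_{\nu\mu}=-\tfrac{\Gamma^{2}}{8\pi}\nabla_{\mu}\rho$. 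Summing, the curvature pieces $-\tfrac{m\Gamma}{4\pi}R\nabla_{\mu}\psi$ from $P^{C}$ and $+2m\cdot\tfrac{\Gamma}{8\pi}R\nabla_{\mu}\psi$ from $2mP^{S}$ cancel exactly, and what remains is $\nabla^{\nu}P_{\nu\mu}=\Gamma\rho\,\nabla_{\mu}\bigl(\psi-\tfrac{\Gamma}{8\pi}\ln\rho\bigr)$. Substituting into (\ref{eq: divergence of P}), dividing by $\Gamma\rho>0$ and applying $\ast$ gives $v^{\mu}=\nabla^{\ast\mu}\bigl(\psi-\tfrac{\Gamma}{8\pi}\ln\rho\bigr)=u^{\mu}-\tfrac{\Gamma}{8\pi}\nabla^{\ast\mu}\ln\rho$, with $u^{\mu}=\nabla^{\ast\mu}\psi$ the naive flow; incompressibility $\nabla^{\mu}v_{\mu}=\epsilon^{\mu\nu}\nabla_{\mu}\nabla_{\nu}(\,\cdot\,)=0$ is immediate since $v_{\mu}$ is a $\ast$-gradient of a scalar.

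For the stationary part, $v^{\mu}\equiv0$ together with continuity $\dot\rho+\nabla_{\mu}(\rho v^{\mu})=0$ forces $\dot\rho=0$ and hence $\dot\omega=\dot\psi=\dot u^{\mu}=0$, so the two stationarity conditions coincide; on a connected surface $\nabla^{\ast}_{\mu}(\psi-\tfrac{\Gamma}{8\pi}\ln\rho)=0$ is equivalent to $\psi=\tfrac{\Gamma}{8\pi}\ln\rho+\mathrm{const}$. Applying $-\Delta$, using $-\Delta\psi=\omega$ once more and $\wb R=4\pi\chi/V$, and dividing by $\Gamma$, yields $\rho+\tfrac{1}{8\pi}\Delta\ln\rho=\tfrac{m}{4\pi}R+\tfrac{k}{V}$ with $k=N-m\chi$; integrating over $\Sigma$ and using $\int R\,\dd V=4\pi\chi$ reproduces $\int\rho\,\dd V=N$, so the constant is consistent with the normalization. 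The gradient expansion is then produced by iterating this equation for large $k$ about the constant solution $\rho\equiv k/V$: at leading order $\tfrac{1}{8\pi}\Delta\ln\rho$ is of curvature order, hence subleading in $k$, so $\rho_{0}=\tfrac{k}{V}+\tfrac{m}{4\pi}R$; feeding $\ln\rho_{0}=\mathrm{const}+\tfrac{V}{k}\tfrac{m}{4\pi}R+\mathcal O(k^{-2})$ into $\rho-\tfrac{k}{V}-\tfrac{m}{4\pi}R=-\tfrac{1}{8\pi}\Delta\ln\rho$ gives the correction $-\tfrac{m}{4\pi}\tfrac{V}{k}\tfrac{1}{8\pi}\Delta R+\mathcal O(k^{-2})$, which is the claimed formula; since every correction is a total Laplacian plus a multiple of $R$, the normalization $\int\rho\,\dd V=N$ is preserved order by order. (Existence of a genuine near-uniform solution for large $N$, to which this series is asymptotic, I would take as known; after $\rho=e^{\phi}$ it is a standard mean-field equation $\Delta\phi+8\pi e^{\phi}=2mR+8\pi k/V$ on a closed surface.)

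The step I expect to be the main obstacle is the term-by-term divergence computation of the second paragraph — in particular, checking that the curvature contributions of $P^{C}$ and of $2mP^{S}$ cancel exactly, and that the three separate $\nabla_{\mu}\rho$ contributions (from $P^{A}$, from the Hessian part of $\tau_{\mu\nu}$, and from the $g_{\mu\nu}\rho$ term inside $P^{S}$) combine into precisely $-\tfrac{\Gamma^{2}}{8\pi}\nabla_{\mu}\rho$. These cancellations are exactly what forces the odd-viscosity coefficient to be the metric-independent number $\Gamma/8\pi$; getting the bookkeeping of the Bochner identity and of $\omega$ versus $-\Delta\psi$ right is where the real care is needed, everything else being elementary.
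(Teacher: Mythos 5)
Your proposal is correct and follows essentially the same route as the paper: take the coarse-grained (smooth) form of $P_{\mu\nu}$ at face value, compute $\nabla^{\nu}P_{\nu\mu}=\Gamma\rho\,\nabla_{\mu}\bigl(\psi-\tfrac{\Gamma}{8\pi}\ln\rho\bigr)$, and divide by $\rho$ in (\ref{eq: divergence of P}); the paper simply states the result of that divergence while you supply the term-by-term bookkeeping (Bochner identity, cancellation of the curvature pieces), which is consistent with the paper's formula. The stationary-case and gradient-expansion arguments also match the paper's.
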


We emphasize the disappearance of the Robin function from the coarse-grained dynamics of vortices. In particular, in the $m=0$ model the two-particle contribution combines with the single-particle effect of the Robin function (the latter was discussed e.g.\ in \cite{TurVitNel10}) leading to the full cancellation of the curvature response in the coarse-grained dynamics.

The above result therefore derives from a microscopical description
and generalizes the mean field equations of
\cites{MontgJoyce74}{Kiessling93}.
One obvious stationary solution for $m=0$ is $\rho=\frac{N}{V}=\mathrm{const}$,
but for a generic $N$ this equation may have many solutions. Since $N$ is large, non-uniqueness e.g.\ on a flat torus follows from the results in 
\cite{RicciTaran98}.

\begin{prop}
The coarse-grained Hamiltonian
on any genus equals (up to flow-independent constants and ignoring the harmonic part of $u^\mu$)
\begin{multline}
H_{\rm CG} [\rho]=\frac12\int\left[u^{2}-\frac{\Gamma^{2}}{4\pi}\rho\ln\rho\right]\dd V=\\
\\=\frac{\Gamma^2}{2}\int\left[\rho\cdot(-\Delta)^{-1}\rho-\frac{1}{4\pi}\rho\ln\rho-2m R\cdot(-\Delta)^{-1}\rho\right]\dd V,\label{eq: coarse-grained H(m)}
\end{multline}
where $u^\mu=\nabla^{\ast\mu}\left((-\Delta)^{-1}\rho-m\Gamma U\right)$ is the coarse-grained flow of the original fluid in terms of the coarse-grained density $\rho$ of vortices.
The $\rho$-$\rho$ Poisson brackets are given by 
\begin{align}
\left\{ F_1,F_2\right\} \left[\rho\right]&=\Gamma^{-1}\int\rho\cdot\epsilon^{\mu\nu}\nabla_{\mu}\frac{\delta F_1}{\delta\rho}\nabla_{\nu}\frac{\delta F_2}{\delta\rho}\dd V\label{Poisson brackets new}
\end{align}
for any two functionals $F_1$ and $F_2$ of $\rho$. The brackets with the harmonic part of the flow will be stated in Section \ref{sec: transport}, and the resulting equation of motion for $\rho$ is, as expected, $\dot\rho=\{\rho,H_{\rm CG}\}=-v^\mu\nabla_\mu\rho$.
\end{prop}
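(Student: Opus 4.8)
The plan is to obtain all three assertions as coarse-grained limits of the finite-$N$ vortex system, reusing the stress-tensor machinery of Proposition~\ref{prop: Stress}. For the Hamiltonian, combine (\ref{eq: divergence of P}) with the continuum Ward identity (\ref{eq: continuous Ward}) and divide by the nowhere-vanishing $\rho$ to get $\nabla_{\mu}\tfrac{\delta H_{\rm CG}}{\delta\rho}=-\Gamma v^{\ast}_{\mu}$; then substitute $v^{\mu}$ from formula (\ref{eq: v in terms of u (m)}) of Proposition~\ref{prop: Coarse-grained flow}, writing $u^{\mu}=\nabla^{\ast\mu}\psi$ with $\psi=\Gamma(-\Delta)^{-1}\rho-m\Gamma U$, which gives $v^{\ast}_{\mu}=-\nabla_{\mu}\!\big(\psi-\tfrac{\Gamma}{8\pi}\ln\rho\big)$ and hence $\tfrac{\delta H_{\rm CG}}{\delta\rho}=\Gamma\psi-\tfrac{\Gamma^{2}}{8\pi}\ln\rho$ up to an additive spatial constant. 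I would then functionally integrate in $\rho$ term by term: $(-\Delta)^{-1}\rho$ integrates to $\tfrac12\int\rho\,(-\Delta)^{-1}\rho\,\dd V$ by the symmetry of the Green function; the $\rho$-independent $U$ integrates to $\int\rho U\,\dd V$, which by self-adjointness of $(-\Delta)^{-1}$ and the definition $U=(-\Delta)^{-1}\tfrac{R}{4\pi}$ is a multiple of $\int R\,(-\Delta)^{-1}\rho\,\dd V$; and $\ln\rho$ integrates to $\int(\rho\ln\rho-\rho)\,\dd V$, whose second term is the constant $-N$. Collecting these reproduces (\ref{eq: coarse-grained H(m)}); the equivalence with the compact form $\tfrac12\int\!\big(u^{2}-\tfrac{\Gamma^{2}}{4\pi}\rho\ln\rho\big)\dd V$ follows by integrating by parts in $\tfrac12\int u^{2}\,\dd V=\tfrac12\int\psi\,\omega\,\dd V$ and expanding $\psi\omega$, discarding $\rho$-independent constants together with the terms killed by $\int(-\Delta)^{-1}\rho\,\dd V=\int U\,\dd V=0$.

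For the Poisson structure, start from the standard symplectic form of point vortices, which on a surface is $\Omega=\sum_{i}\Gamma\,\dd V(z_{i})$ (the area $2$-form at each vortex site, weighted by $\Gamma$) and reproduces (\ref{Kirchhoff covariant}); its inverse gives, on phase-space functions, $\{F,G\}=\Gamma^{-1}\sum_{i}\epsilon^{\mu\nu}\partial_{z_{i}^{\mu}}F\,\partial_{z_{i}^{\nu}}G$. For a functional of $\rho=\sum_{i}\delta_{z_{i}}$, an infinitesimal displacement of $z_{i}$ changes $\rho$ by $-\nabla_{\mu}(\delta z_{i}^{\mu}\,\delta_{z_{i}})$, so the chain rule yields $\partial_{z_{i}^{\mu}}F=\nabla_{\mu}\tfrac{\delta F}{\delta\rho}\big|_{z_{i}}$ and hence $\{F_{1},F_{2}\}=\Gamma^{-1}\sum_{i}\epsilon^{\mu\nu}\nabla_{\mu}\tfrac{\delta F_{1}}{\delta\rho}\nabla_{\nu}\tfrac{\delta F_{2}}{\delta\rho}\big|_{z_{i}}=\Gamma^{-1}\int\rho\,\epsilon^{\mu\nu}\nabla_{\mu}\tfrac{\delta F_{1}}{\delta\rho}\nabla_{\nu}\tfrac{\delta F_{2}}{\delta\rho}\,\dd V$, which is (\ref{Poisson brackets new}); since it depends on $\rho$ only through smooth integrals it is taken to survive coarse-graining verbatim. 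Finally, applying this bracket to $F_{1}=\rho(x)$ (so $\tfrac{\delta F_{1}}{\delta\rho}=\delta_{x}$) against $F_{2}=H_{\rm CG}$ and integrating by parts gives $\dot\rho=\{\rho,H_{\rm CG}\}=-\nabla_{\mu}(\rho v^{\mu})$, where $\epsilon^{\mu\nu}\nabla_{\nu}\tfrac{\delta H_{\rm CG}}{\delta\rho}\propto v^{\mu}$ by (\ref{coarse-grained v in terms of H}); by the incompressibility $\nabla_{\mu}v^{\mu}=0$ of Proposition~\ref{prop: Coarse-grained flow} this collapses to $-v^{\mu}\nabla_{\mu}\rho$, which is also the direct coarse-graining of $\dot\rho=\sum_{i}\dot z_{i}^{\mu}\partial_{z_{i}^{\mu}}\delta_{z_{i}}=-\nabla_{\mu}\sum_{i}v_{i}^{\mu}\delta_{z_{i}}$. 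Throughout, the curl-free harmonic component of $u^{\mu}$ is set aside, to be reinstated in Section~\ref{sec: transport}.

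The one genuinely delicate step — everything else being functional integration, a chain rule, and integrations by parts, with the overall powers of $\Gamma$ tracked carefully — is the passage to the coarse-grained limit in the first paragraph: one must be certain that the $\rho$-dependence of $P_{\mu\nu}$, hence of $\tfrac{\delta H_{\rm CG}}{\delta\rho}$, carries over term by term. The pair interaction naively coarse-grains to $\iint\rho\,G\,\rho$, whose divergent coincidence limit is precisely what the Robin function subtracts in (\ref{eq: discrete Hamiltonain (m)}); showing that this subtraction turns into the finite \emph{local} term $-\tfrac{\Gamma^{2}}{8\pi}\int\rho\ln\rho\,\dd V$ — and that no residual Robin-function contribution survives — is exactly the content of Proposition~\ref{prop: Stress} together with the observation that $P_{\mu\nu}$ coarse-grains trivially, and this is where the argument must be handled with care (details deferred to Appendix~\ref{sec: Appendix g=0}).
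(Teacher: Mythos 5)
Your proposal is correct and takes essentially the same route as the paper: there, too, $\tfrac{\delta H_{\rm CG}}{\delta\rho}=\Gamma\psi-\tfrac{\Gamma^{2}}{8\pi}\ln\rho$ is read off by combining the continuum Ward identity (\ref{eq: continuous Ward}) with the coarse-grained divergence $\nabla^{\mu}P_{\mu\nu}$ computed in the proof of Proposition~\ref{prop: Coarse-grained flow} and then functionally integrated, while the bracket (\ref{Poisson brackets new}) is inherited from the point-vortex symplectic form $\Gamma\sum_i\dd V(z_i)$ exactly as you describe. One incidental observation: your bookkeeping yields the curvature term as $-m\Gamma^{2}\int\rho\,U\,\dd V=-\tfrac{m\Gamma^{2}}{4\pi}\int R\,(-\Delta)^{-1}\rho\,\dd V$, so the prefactor $-2m$ printed in the second line of (\ref{eq: coarse-grained H(m)}) appears to be missing a factor of $1/4\pi$ — a slip in the statement, not in your argument.
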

The term $\int \rho\ln\rho\dd V$, like any local function of $\rho$, is a Casimir in these Poisson brackets, but it does affect the energy and the vortex flow.

\subsection{Discussion}

\subsubsection{Vortex matter and negative temperatures}

The coarse-grained energy that we have obtained directly from the Hamiltonian of point vortices contains a term that looks like the von Neumann entropy. However, this is not an entropy since we have done everything at zero temperature (unlike in the statistical mechanics approaches such as \cite{LundgPoin77}).
Interestingly, the result is identical to the free energy of a statistical vortex ensemble at a special negative temperature.

Namely, if $H$ in (\ref{eq: coarse-grained H(m)})
is interpreted as a free energy $H=\Gamma^{2}\left(E-S/\wt{\beta}\right)$
with entropy $S=-\int\rho\ln\rho\,\dd V$, this system appears to
have precisely the negative inverse temperature $\wt{\beta}=-8\pi$. There is only a limited understanding of the equilibria at $\wt{\beta}\leq-8\pi$ in some special cases \cite{CagLiMaPu92,CagLiMaPu95}.

In the context of the Dyson gas \cite[(1.1), (1.4), (2.11)]{ZabroWieg06},
the ratio between the corresponding terms in the free energy is $\wt{\beta}=8\pi\frac{2-\beta}{\beta}$, where $2\beta$ is the power of the Vandermonde determinant in the corresponding ensemble. There, $\beta=\frac{\Gamma^{2}}{k_{B}T}$ is interpreted
as the inverse temperature of the gas, which is believed to form a
Wigner crystal at large $\beta$, i.e.\ precisely when $\wt{\beta}\searrow-8\pi$.

\subsubsection{Odd viscosity\label{subsec:Odd-viscosity-and}}

On flat regions of the surface, the $m$-dependent part $2mP^S_{\mu\nu}$ of the stress tensor is divergenceless, making it impossible to observe the real value of $m$ in the bulk of the vortex matter.

The traceless part of $\tau_{\mu\nu}$ (\ref{eq: stress tensor T}) is also
known as \textit{odd stress} and in complex coordinates ($4\tau_{zz}=\tau_{xx}-\tau_{yy}-i\left(\tau_{xy}+\tau_{yx}\right)$,
$2u_{z}=u_{x}-iu_{y}=2i\partial_{z}\psi$) it is $\tau_{zz}=-2\eta i\partial_{z}u_{z}=2\eta\partial_{z}^{2}\psi$
with the odd viscosity coefficient
\begin{equation}
\eta=\frac{\Gamma}{8\pi}.
\end{equation}
In \cite{WiegmAban14}, only the case $m=1/2$ was examined, when this is the only surviving new term in the stress, see (\ref{eq: alternative anomalous stress}). Note
that $\eta$ matches the coefficient of the anomalous term
$\rho\ln\rho$, since both originate from the same phenomenon.

\subsubsection{Momentum flux tensor}

$P_{\mu\nu}$ is not a conserved current in the sense that its divergence does not equal the time derivative of any quantity that has appeared so far. However, using the continuity equation $\dot\rho+v^\mu\nabla_\mu\rho=0$, we
observe the relations
\begin{equation}
-\epsilon^{\mu\nu}\nabla_{\mu}\nabla_{\lambda}P_{\nu\lambda}=\nabla\times\left(\Gamma\rho\bm{v}^{\ast}\right)=\Gamma\dot{\rho}=\nabla\times\dot{\bm{u}}.
\end{equation}
This lets us find $\dot{u}^{\mu}$ up to a gradient term (assuming no external forces aside from the ones we have already announced). This way we recover the original Euler equation that, up to a redefinition of $p$, can be now written as
\begin{equation}
\dot{u}^{\mu}+\nabla_{\nu}P^{\nu\mu}=-\nabla^{\mu}\left(\frac{u^{2}}{2}+p\right).\label{eq: anomalous euler with P}
\end{equation}
The pressure $p$ can be found from the incompressibility constraint $\nabla_{\mu}u^{\mu}=0$. Therefore $P_{\mu\nu}$ differs from the momentum flux tensor $\Pi_{\mu\nu}$ of this anomalous Euler equation only by a trace term:
\begin{gather}
\dot{u}^{\mu}+\nabla_{\nu}\Pi^{\nu\mu}=0,\\
\Pi_{\mu\nu}=P_{\mu\nu}+\frac{1}{2}g_{\mu\nu}\left(u^{2}+2p\right).\label{eq: momentum flux tensor}
\end{gather}

Whereas the momentum flux tensor provides the \emph{acceleration} of the fluid, $P_{\mu\nu}$ describes the \emph{velocities} of vortices. This observation has to do with the fact that the vortex Hamiltonian and symplectic form can be obtained by a reduction of the original infinite-dimensional system with respect to the relabeling symmetry \cite{MarsdWein83}, giving equations of motion that are first order in time.

Moreover, $P_{\mu\nu}$ does not account for the global circulations encoded in the harmonic part of $u^\mu$ (see Sec.\ \ref{sec: nonzero genus}). Therefore it can be understood as exactly the part of the momentum flux tensor that generates local torques (shears) in the original fluid, driven by the vortices. All other forces acting on the fluid are curl-free, don't create any local torques and therefore don't influence the motion of vortices.

\subsubsection{Details on nonzero genus}\label{sec: nonzero genus}

Here we list the necessary changes to the above formulas if one restores the harmonic part of the flow $u^\mu$ (present only for $\mathsf{g}\geq 1$), which we have ignored so far.

According to Hodge theory, an incompressible flow admits a unique $L^2$-orthogonal decomposition of the form 
\[
u_\mu =\wh u_\mu+\wt u_\mu=\nabla^\ast_\mu\psi+\wt u_\mu,\quad \epsilon^{\mu\nu}\nabla_\mu \wt u_\nu=0.
\]
The total energy of such a flow now includes the energy of $\wt{u}^\mu$:
\[H=\wh{H}+\wt{H},\quad \wh{H}=\frac{1}{2}\int_\Sigma \lvert\nabla\psi\rvert^2\dd V,\quad \wt{H}=\frac{1}{2}\int_\Sigma \wt{u}^2\dd V.\]
The Euler equation describes the dynamics of both $\wh{u}^\mu$ and $\wt{u}^\mu$. The vortex Hamiltonian (\ref{eq: discrete Hamiltonain (m)}) should really be denoted $\wh{H}_N$, and the velocities of vortices need to include the value of $\wt{u}^\mu$:
\[v_{i}^{\mu}=\wh v_{i}^{\mu}+\wt u(z_i),\quad \wh{v}_i^\mu=\Gamma^{-1}\nabla^{\ast\mu}_{z_{i}}\wh{H}_N.\]
Therefore the curl-free part of the flow $\wt{u}^\mu$ has to be included in the symplectic structure for vortices, correcting the symplectic form proposed in \cite{BoattKoill08}. The full Poisson brackets are stated in Sec.\ \ref{sec: transport}.

The Ward-type identity (\ref{eq: ward for H}) gets corrected as
\begin{equation}
-\Gamma\rho \wh v^{\ast}_\nu=2\nabla^{\mu}\frac{\delta \wh H_N}{\delta g^{\mu\nu}}.
\end{equation}
The stress $P_{\mu\nu}$ can be defined as the variation of the total energy $H_N=\wh H_N+\wt{H}$ taken at fixed $\wt{H}$, in addition to previous constraints. The statement of Proposition \ref{prop: Stress} is now
\begin{equation}
        \Gamma\rho v^\ast_\mu=-\nabla^{\nu}P_{\nu\mu}+\Gamma\rho \wt u^\ast_\mu,
\end{equation}
and the Euler equation in terms of $P_{\mu\nu}$ reads
\begin{equation}
\dot{u}^{\mu}+\nabla_{\nu}P^{\nu\mu}=\left(\omega+B+\frac{m\Gamma}{4\pi}(R-\wb{R})\right)\wt{u}^{\ast\mu}-\nabla^{\mu}\left(\frac{u^{2}}{2}+p\right).
\end{equation}
All terms except $B\wt{u}^\mu$ on the r.h.s.\ can be written as the divergence of a symmetric tensor, therefore we can redefine $\Pi_{\mu\nu}$ as
\begin{equation}
\dot{u}^{\mu}+\nabla_{\nu}\Pi^{\nu\mu}=B\wt{u}^{\ast\mu},
\end{equation}
and the expression for $\Pi_{\mu\nu}$ is the one for $P_{\mu\nu}$ after the substitution $\nabla_\mu\psi\mapsto-u^\ast_\mu$ in the traceless part and the replacement of the trace by $u^2/2+p$ as before.

\section{Conclusions}

We have developed a new framework for point vortices on curved surfaces and coarse-grained the flows of vortices on arbitrary closed surfaces. Our method of coarse-graining should work in principle on surfaces with boundaries, although we expect many nontrivial effects related to the appearance of boundaries of the vortex matter itself, not coinciding with the boundaries of the surface \cite{BogatWiegm19}.

In addition, we have related the anomalous behavior of the vortex matter to its odd viscosity, which is found to be a universal fraction of the vorticity quantum $\Gamma$. We have distinguished this effect from the completely classical effect of spin, which is merely a curvature-dependent force acting on the fluid. It is apparent that odd viscosity is related to the anomalous difference of $1/2$ between the spins of the original fluid and of discrete vortices in it. Comparing the vortex Hamiltonian with the free energy of a Laughlin state of particles with conformal spin $s$ \cite{FerraKlev14}, we find 
\begin{equation}
2m=1+2s.
\end{equation}
This further justifies treating $s=m-1/2$ as the spin of a vortex.

Finally, the anomalous negative ``temperature'' of vortex matter might explain the recent observations of negative-temperature states of vortices in the Bose-Einstein condensate \cites{GaRYBBBRDN18}{JoGrStBiSH18}.

\subsection*{Acknowledgements}

I am grateful to Alexander Abanov, William Irvine, Semyon Klevtsov,  Alexios Polychronakos, and Vincenzo Vitelli for helpful discussions, and to Paul Wiegmann for guidance during the preparation of this work.  
\bigbreak
\PRLsep

\setlength\bibhang{0pt} 
\setlength\bibitemsep{0pt} 
\setstretch{0.85} 
\printbibliography[heading=bibintoc]
\setstretch{1} 

\appendix
\renewcommand{\theequation}{\thesection.\arabic{equation}}
\setcounter{equation}{0}

\section{Applications}

\subsection{Quantization of spin}

First we argue that the spin parameter $m$ in the Euler equation (\ref{omega-Euler-R}) has to be integer or half-integer, provided that all vorticity is quantized in units of $\Gamma$. The argument is analogous to the Dirac quantization condition. Namely, if we want (\ref{omega-Euler-R}) to apply to arbitrary initial vorticity distributions, then we can consider a solution with a single point vortex (an analog of a Dirac string)
\[\omega=-\frac{m\Gamma}{4\pi}R+A\delta_{z_0}.\]
Because of the neutrality condition, $A=m\chi\Gamma$. Finally, the quantization of vorticity requires that $m\chi\in\mathbb{Z}$. Since $\chi$ is even, this means that most generally $m\in\mathbb{Z}/2$.


\subsection{Conical singularities}

In our derivation of the Liouville equation (\ref{eq: Liouville-type equation}) we have tacitly assumed smoothness of the coarse-grained density $\rho$, which is why strictly speaking it cannot be directly applied to surfaces with singular curvature. Here we only make a naive attempt to extract the leading behavior of the density. 

Consider a surface whose curvature is concentrated at one point $z_0$:
\[R(z)=4\pi\alpha\delta_{z_0}(z)+\mathcal{O}(1),\quad z\to z_0.\]
Such a singularity corresponds to a conical point with the cone angle $2\pi(1-\alpha)$. We assume $0<\alpha<1$. Let us look for solutions of (\ref{eq: Liouville-type equation}) that turn to zero at $z_0$ as a power-law of the distance $r$ from $z_0$:
\[\rho\sim r^{2\gamma}.\]
We have $\Delta\ln\rho\sim 4\pi\gamma \delta_{z_0}$, which matches the singular term on the right hand side of (\ref{eq: Liouville-type equation}) if and only if
\[\gamma/\alpha=2m\in \mathbb{Z}.\]

Moreover, since away from the singularity we have $\rho\approx \rho_{\infty}=k/V$, we find the number of vortices concentrated at the singularity
\[\int (\rho-\rho_{\infty})\dd V=\frac{m}{4\pi}\int R\dd V=m\alpha.\]
These formulas are similar to those for the classical contributions to the density of Quantum Hall fluids on a cone \cite{CanChiLaWi16}.

\subsection{Transport of vortices}\label{sec: transport}

The curl-free part $\wt{u}^\mu$ of the incompressible flow on $\Sigma$ of genus $\mathsf{g}$ can be decomposed as
\[
\wt u_\mu=\sum_{a=1}^{2\mathsf{g}} \gamma_a\theta^a_\mu,
\]
where $\theta^a_\mu$ are fixed curl-free \textit{and} divergence-free (closed and co-closed) one-forms, and can be chosen so that $\theta^{\mathsf{g}+a}_\mu=\tensor{\epsilon}{_\mu^\nu}\theta^a_\nu$ and $\int g^{\mu\nu} \theta^a_\mu \theta^b_\nu \dd V=\delta^{ab}$. They form a harmonic orthonormal basis of the de Rham cohomology $H^1(\Sigma,\mathbb{R})$. The numbers $\gamma_a=\int_\Sigma u^\mu \theta^a_\mu \dd V$ parametrize the space $H^1(\Sigma,\mathbb{R})$ and express the mean ``homological circulations''. Namely, if we let $C_a$ be a fundamental loop on $\Sigma$ and choose $\theta^a$ so that $\oint_{C_a} \theta^b=c_a\delta^{ab}$, then $\gamma_a c_a$ is the extra non-vorticity-generated circulation included in $\oint u$ evaluated along any loop homological to $C_a$.

The energy of the flow can now be rewritten as
\begin{equation}
H=\wh{H}+\frac{1}{2}\sum_a\gamma_a^2
\end{equation}
and the dynamics of the vorticity $\omega$ and the $\gamma_a$'s is determined by the Poisson brackets
\begin{align}
\left\{ F,G\right\} \left[\omega\right]&=\int\left(\omega+\frac{m\Gamma}{4\pi}R\right)\cdot\epsilon^{\mu\nu}\nabla_{\mu}\frac{\delta F}{\delta\omega}\nabla_{\nu}\frac{\delta G}{\delta\omega}\dd V,\label{Poisson brackets}\\
\left\{F[\omega],\gamma_a\right\}&=\int \left(\omega+\frac{m\Gamma}{4\pi}R\right)\cdot \theta_\mu^a\nabla^\mu \frac{\delta F}{\delta\omega}\dd V,\\
\left\{\gamma_a,\gamma_b\right\}&=\int\left(\omega+B+\frac{m\Gamma}{4\pi}\left(R-\wb{R}\right)\right)\cdot \epsilon^{\mu\nu}\theta^a_\mu\theta^b_\nu \dd V
\end{align}
for any functionals $F$ and $G$ of $\omega$. The complete equations of motion then read $\dot\omega=\{H,\omega\}$, which is (\ref{Helmholtz-R}), and 
\[\dot\gamma_a=\{H,\gamma_a\}=\int \left(\omega+\frac{m\Gamma}{4\pi}R\right)\theta^a_\mu u^{\ast\mu}\dd V+\sum_b\gamma_b\{\gamma_b,\gamma_a\}.\]
We see that the uniform magnetic field affects the dynamics only on multi-connected surfaces through the Poisson brackets between $\gamma_a$'s. The structure of incompressible flows on surfaces of arbitrary genus was studied in \cites{IzosKheMou15}{IzosiKhesi17}.

The tremendous effect that the uniform magnetic field can have on the dynamics of the fluid on multi-connected surfaces can be easily seen in the simplest example of a torus.

Consider a flat torus with a metric corresponding to a rectangle of size $L_1\times L_2$ with identified edges. In the standard Cartesian coordinate system, we can let $\bm{\theta}^a$, $a=1,2$ be the $L^2$-normalized coordinate vector fields. Let the torus be uniformly filled with vortices, $\rho=N/V$. This is clearly a configuration with zero coarse-grained vorticity, which means that $\wh{\bm{v}}=\wh{\bm{u}}=0$. Furthermore, let us enable a constant and uniform external force by adding the term $\bm{f}=f_1 \bm{\theta}^1+f_2\bm{\theta}^2$ to the right-hand side of our Euler equation. Due to translational symmetry, we also have $\bm{\nabla}(u^2/2+p)=0$.

The Euler equation (\ref{omega-Euler}) then takes the form 
\[
\dot{\bm{u}}=B\wt{\bm{u}}^\ast+\bm{f}.
\] 
This equation admits a simple solution -- the steady uniform flow $\bm{u}=B^{-1}\bm{f}^\ast$ \textit{orthogonal to the direction of the force}. More generally, a rotating initial state $\bm{u}=\gamma_1\bm{\theta}^1+\gamma_2\bm{\theta}^2$ moves according to $\dot\gamma_1=B\gamma_2+f_1$, $\dot\gamma_2=-B\gamma_1+f_2$. The apparent Hall conductance $\rho\bm{v}=\sigma^H \bm{f}^\ast$ is found to be equal to $\sigma^H=\frac{\Gamma N}{BV}=\nu$, the ``filling fraction''.

This argument can also be interpreted as a hydrodynamic analog of Laughlin's pumping argument \cite{Laughlin81}, where enabling a temporary electric force $\bm{f}$ that slowly increases the magnetic flux along a non-contractible cycle on the torus by $\Gamma$ leads to $\nu$ vortices being transported in a direction orthogonal to $\bm{f}$. Of course, the force $\bm{f}$ can be of any nature, as long as over the time of its presence it pumps total momentum flux of magnitude $\Gamma$ into the system.


\subsection{Rotating surfaces}

In the plane rotating at an angular
frequency $\Omega$, Kirchhoff's equations read
\begin{equation}
i\dot{\bar{z}}_{i}=-\Omega\bar{z}_{i}+\frac{1}{2\pi}\sum_{j\ne i}\frac{\Gamma}{z_{i}-z_{j}}.\label{eq: Kirchhoff-Omega}
\end{equation}
In the Hamiltonian formalism for vortices, this corresponds to an addition of the centrifugal potential $\Gamma\Omega\sum_i |z_i|^2$ to the energy. The analog of this potential on arbitrary surfaces is the addition of a large vortex of strength
$2\Omega V$ and same sign as $\Gamma$ at a fixed point $z_{0}$ (viewed as the ``infinity'') of our closed surface. For simplicity we only consider the case $\mathsf{g}=m=0$
here. The Hamiltonian for this system is
\begin{equation}
H_N\mapsto H_N+2\Gamma\Omega V\sum_{j>0}G_{z_{0}}\left(z_{j}\right),
\end{equation}
and the corrected stress tensor is given by the same formulas with
the substitutions 
\begin{align}
\psi & \mapsto \psi+2\Omega V\cdot G_{z_{0}},\\
\omega & \mapsto \omega -2\Omega+2\Omega V\delta_{z_{0}}.
\end{align}
In the lowest-energy states, vortices are repelled from $z_0$ and it is easy to see that the Liouville equation (\ref{eq: Liouville-type equation}) after a shift of the l.h.s.\ by $2\Omega$ does not have any smooth solutions. Instead, the coarse-grained density is nonzero only inside a bounded domain whose area is approximately equal to $N\Gamma/2\Omega$. One can introduce a notion of angular impulse (or generalized angular momentum) for such configurations, which we do in the next section.

\subsection{Sum rules\label{subsec:Sum-rules}}

Kirchhoff's equations (\ref{eq: Kirchhoff-Omega}) satisfy an exact sum
rule for the angular impulse \cite{Saffman92}:
\begin{equation}
    L=\frac{i}{2}\sum_{i}\left(z_{i}\dot{\bar{z}}_{i}-\dot{z}_{i}\bar{z}_{i}\right)=\frac{\Gamma}{4\pi}N\left(N-1\right)-\Omega\sum_{i}\left|z_{i}\right|^{2}.\label{eq: old sum rule}
\end{equation}
The term linear in $N$ can be called anomalous since it comes from the exclusion of self-interactions of vortices. 
The generalization of the sum rule to closed surfaces requires a choice of an ``axis
of rotation''. For this purpose we once again pick a point $z_{0}\in\Sigma$
and consider the potential $G_{z_{0}}$. The angular impulse relative
to $z_{0}$ is defined as
\begin{equation}
    L=2V\intop_{\Sigma\setminus\left\{ z_{0}\right\} }\rho\nabla^{\mu}G_{z_{0}}\epsilon_{\mu\nu}v^{\nu}\mathrm{d}V=\frac{2V}{\Gamma}\intop_{\Sigma\setminus\left\{ z_{0}\right\} }P_{\mu\nu}\nabla^{\mu}\nabla^{\nu}G_{z_{0}}\dd V.
\end{equation}
This quantity is conserved only if the surface has a special symmetry
w.r.t.\ $z_{0}$, e.g.\ if it is axisymmetric like an ellipsoid. More
precisely, the vector field $\epsilon^{\mu\nu}\nabla_{\nu}G_{z_{0}}$
has to be a (singular) Killing vector.

Moreover, we see immediately that the anomalous stress $P^A_{\mu\nu}$ is directly related to the anomalous angular impulse:
\[L^A=\frac{2V}{\Gamma}\intop_{\Sigma\setminus\left\{ z_{0}\right\} }P^A_{\mu\nu}\nabla^{\mu}\nabla^{\nu}G_{z_{0}}\dd V=-\frac{\Gamma}{4\pi}N,\]
since $\Delta G_{z_0}=1/V$ away from $z_0$.

Finally, using the relation between $v$, $u$ and $\rho$, we obtain the formula for $L$ in terms of the density:
\begin{gather}
L =\frac{\Gamma}{4\pi}\iint\rho(z)L\left(z,w\right)\rho(w)\mathrm{d}V(z)\mathrm{d}V(w)-\frac{\Gamma}{4\pi}N,\\
L\left(z,w\right) =-4\pi V\left(\nabla_{\mu}G_{z_{0}}(z)\nabla_{z}^{\mu}G\left(z,w\right)+z\leftrightarrow w\right).
\end{gather}
If we also enable the
``solid rotation'' of angular frequency $\Omega$ as above, we arrive
at the generalized sum rule
\begin{multline}
L=\frac{\Gamma}{4\pi}\iint\rho(z)L\left(z,w\right)\rho(w)\mathrm{d}V(z)\mathrm{d}V(w)-\frac{\Gamma}{4\pi}N-\\
-\left(4\Omega V^{2}+2N\Gamma V\right)\int\left\Vert \nabla G_{z_{0}}\right\Vert ^{2}\rho\mathrm{d}V,
\end{multline}

On a sphere, we can always choose a single coordinate chart covering
$\Sigma\setminus\left\{ z_{0}\right\} $, in which case $G_{z_{0}}=\frac{1}{2}K$
where $K$ is the local K\"ahler potential defined by $\partial\bar{\partial}K=\frac{\sqrt{g}}{2V}$. Then a simple
computation in complex coordinates leads to
\begin{equation}
    L\left(z,w\right)=2V\Re\frac{\frac{\partial K}{\sqrt{g}}(z)-\frac{\partial K}{\sqrt{g}}(w)}{\bar{z}-\bar{w}}.
\end{equation}
Note that in the limit of the infinite plane, $VK\to\frac{1}{2}\left|z\right|^{2}$
and $L(z,w)\to1$, so we get back the original sum rule (\ref{eq: old sum rule})
(provided that $\Omega V\gg\Gamma N$ in the limit).

\begin{rem*}
    The sum rule in the plane can be alternatively derived from the Hamiltonian by performing a uniform infinitesimal dilatation $z_{i}\to e^{\sigma}z_{i}$ (or $\rho(z)\to e^{-2\sigma}\rho\left(e^{-\sigma}z\right)$ in the continuous formulation) and equating the derivative $\restr{\frac{\partial H}{\partial\sigma}}{\sigma=0}$ to zero. On a closed surface, the necessary transformation is \textit{not} a dilatation of the surface. Instead all vortices need to be displaced along the vector field $\nabla^{\mu}G_{z_{0}}$, i.e.\ roughly towards or away from the large vortex at $z_{0}$. This displacement still looks like a dilatation locally since the divergence of this vector field is constant.
\end{rem*}

\section{Variation of the Green function}\label{sec: Appendix gen}

First, let $\mathring{g}_{\mu\nu}$
be a \textit{reference metric of a constant curvature} that is conformal
to $g_{\mu\nu}$. We denote its total volume by $\mathring{V}$, its covariant derivative by $\mathring\nabla$, its Green function by $\mathring{G}(x,y)$, and so on.

Furthermore, for any conformal deformation $\mathring{g}_{\mu\nu}\to g_{\mu\nu}$
we introduce a potential $K$ (not to be confused with the local Kähler potential we used above) defined up to an additive constant by
\begin{equation}
    -\Delta K=\frac{2}{\mathring{V}}\frac{\sqrt{\mathring{g}}}{\sqrt{g}}-\frac{2}{V},\label{eq: Kahler definition}
\end{equation}
where $\Delta$ is the Laplace-Beltrami operator of the final metric.

The main instruments for our derivation
are the formulas for the metric variations of the Green and the Robin
functions.
\begin{prop}\label{prop: variation of Green}
For any genus, the variation of the Green function $G\left(x,y\right)$
with respect to the metric, defined as $\delta G\left(x,y\right)=\int\frac{\delta G\left(x,y\right)}{\delta g^{\mu\nu}(z)}\delta g^{\mu\nu}(z)\mathrm{d}V(z)$,
is
\begin{multline}
\frac{\delta G\left(x,y\right)}{\delta g^{\mu\nu}\left(z\right)}=-\left[\nabla_{(\mu}G_{x}\nabla_{\nu)}G_{y}-\frac{1}{2}g_{\mu\nu}\nabla_{\rho}G_{x}\nabla^{\rho}G_{y}\right]+\\
+\frac{1}{2V}g_{\mu\nu}\left(G_{x}+G_{y}\right).\label{eq: G(x,y) variation}
\end{multline}
Here, all omitted arguments are $z$.
\end{prop}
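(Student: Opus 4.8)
I would avoid differentiating the defining system $-\Delta G_w=\delta_w-1/V$, $\int_\Sigma G_w\dd V=0$ head on — that mixes the variation of the Laplacian, of the metric-dependent delta, and of $1/V$ — and instead vary a bilinear ``energy'' form of the Green function. Integrating by parts on the closed surface and using $\int_\Sigma G_w\dd V=0$ gives the exact identity
\[
G(x,y)=\int_\Sigma \nabla^\mu G_x(z)\,\nabla_\mu G_y(z)\,\dd V(z),\qquad x\neq y,
\]
whose integrand is integrable because the logarithmic singularities of $G_x$ and $G_y$ sit at different points. This identity is symmetric in $x,y$ and involves only first derivatives of $G$ — exactly the structure of the claimed kernel — so it is the natural object to vary.

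The second step is to vary both sides using the two facts special to two dimensions: $\delta(\dd V)=-\tfrac12(g_{\mu\nu}\delta g^{\mu\nu})\dd V$, and $\delta(\sqrt g\,g^{\mu\nu})=\sqrt g\,(\delta g^{\mu\nu}-\tfrac12 g^{\mu\nu}g_{\alpha\beta}\delta g^{\alpha\beta})$, so that $\sqrt g\,g^{\mu\nu}$ reacts only to the trace-free part of the variation. Varying $\sqrt g\,g^{\mu\nu}$ in the integrand produces $\int(\delta g^{\mu\nu}-\tfrac12 g^{\mu\nu}g_{\alpha\beta}\delta g^{\alpha\beta})\nabla_\mu G_x\nabla_\nu G_y\,\dd V$, which is the trace-free tensor of the proposition contracted with $\delta g^{\mu\nu}$. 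Varying $G_x$ and $G_y$ themselves gives $\int\nabla^\mu(\delta G_x)\nabla_\mu G_y\,\dd V+(x\leftrightarrow y)$; integrating by parts once more and substituting $-\Delta G_y=\delta_y-1/V$ turns each of these into $\delta G_x(y)-\tfrac1V\int\delta G_x\,\dd V=\delta G(x,y)-\tfrac1V\int\delta G_x\,\dd V$ (and symmetrically). Collecting terms gives $\delta G(x,y)=2\,\delta G(x,y)+(\text{that trace-free term})-\tfrac1V\int(\delta G_x+\delta G_y)\,\dd V$; solving for $\delta G(x,y)$ flips the sign of the trace-free term to the $-[\,\cdots]$ of the claim.

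The only remaining ingredient is $\int_\Sigma\delta G_w\,\dd V$, obtained from the normalization: $0=\delta\!\int_\Sigma G_w\,\dd V=\int\delta G_w\,\dd V-\tfrac12\int G_w\,g_{\mu\nu}\delta g^{\mu\nu}\,\dd V$, so $\int\delta G_w\,\dd V=\tfrac12\int G_w\,g_{\mu\nu}\delta g^{\mu\nu}\,\dd V$. Substituting this for $w=x,y$ converts the last term into $\tfrac1{2V}\int g_{\mu\nu}(G_x+G_y)\delta g^{\mu\nu}\,\dd V$, which is the second term of the proposition; and since everything multiplies the symmetric $\delta g^{\mu\nu}$, I may symmetrize $\nabla_\mu G_x\nabla_\nu G_y\mapsto\nabla_{(\mu}G_x\nabla_{\nu)}G_y$ at no cost. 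Reading off the coefficient of $\delta g^{\mu\nu}(z)$ yields the stated formula, and since nothing in the argument used the genus, it settles all cases at once.

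The main obstacle is analytic, not algebraic: each integration by parts has to be justified near $x$ and $y$, where $\nabla G\sim d^{-1}$ fails to be square-integrable. I would argue that the metric-dependence of the leading singularity $-\tfrac1{2\pi}\ln d(x,y)$ contributes only a bounded (at worst logarithmic) piece to $\delta G_x$, so the boundary contributions in $\int\nabla^\mu(\delta G_x)\nabla_\mu G_y$ drop out, and I would state at the outset that the identity is meant for $x\neq y$ — and, where it is later applied at coincident points, in the regularized distributional sense already flagged around Proposition \ref{prop: Stress}. A built-in consistency check is that the right-hand side, as a function of the first argument, has zero mean against $\dd V$ (the trace-free term integrates to zero and the two correction pieces are constructed to cancel), so it genuinely lies in the range of $(-\Delta)^{-1}$.
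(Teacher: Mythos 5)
Your derivation is correct, and it is a genuinely different (and more self-contained) route than the paper's. The paper does not prove Proposition \ref{prop: variation of Green} from first principles: it identifies the formula as a covariant form of the Hadamard/Schiffer variational formula, cites the conformal (trace) part from the string-theory literature, and advocates the general strategy of fixing the traceless part from the Ward identity (\ref{eq: Ward}) as a $\bar\partial$-problem, unique up to holomorphic quadratic differentials --- this is the route it actually follows for the Robin function. You instead vary the exact bilinear representation $G(x,y)=\int\nabla^\mu G_x\nabla_\mu G_y\,\dd V$, which delivers the trace and traceless parts simultaneously, in any genus, with no quadratic-differential ambiguity left to argue away (the modular invariance the paper invokes is automatic here); the $2\,\delta G-\delta G$ bookkeeping and the use of $\delta(\dd V)$ in the normalization are both handled correctly. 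What the paper's method buys in exchange is reusability: it extends to $G^R$ and $U$, where no such bilinear identity is available. The one real burden of your approach is the analytic one you flag --- differentiating under the integral and integrating $\nabla^\mu(\delta G_x)\nabla_\mu G_y$ by parts near $x$ and $y$ --- and your justification (that $\delta\ln d(x,z)=\delta d/d$ is bounded, so $\delta G_x$ is regular enough) is the right one. One small correction: your closing ``consistency check'' is misstated. The $x$-mean of the right-hand side is not zero: $\int_x\nabla_{(\mu}G_x\nabla_{\nu)}G_y\,\dd V(x)$ and $\int_x G_x(z)\,\dd V(x)$ vanish, but $\int_x\frac{1}{2V}g_{\mu\nu}G_y(z)\,\dd V(x)=\frac12 g_{\mu\nu}G_y(z)$ survives --- as it must, since the normalization $\int G(x,y)\,\dd V(x)=0$ uses the varied volume form, so $\int\delta G(x,y)\,\dd V(x)=-\int G\,\delta(\dd V)\neq0$. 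The kernel passes the correctly stated check; only the parenthetical claim that ``the two correction pieces cancel'' should be dropped.
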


This is nothing but a covariant form of the well known Hadamard variational
formula (more often presented as a variation of the boundary of a
domain) or Schiffer's interior variation formula \cite[(7.8.16)]{SchifSpen54}.
The special case of conformal variations (trace part) is often used
in string theory \cite[(2.87)]{DHokPhon88}, and the quasi-conformal
variation (traceless part) can also be found in \cite{Sontag75,Taniguchi89}\cite[(10)]{KuzenSolov91}. The Green function is modular invariant, which is why this formula is universal for surfaces of any genus.

Any purely geometric functional $A\left[g\right]$, i.e.\ a diffeomorphism invariant functional dependent only on the metric, has a divergenceless
metric variation. Namely, for any
infinitesimal diffeomorphism given by a vector field $\xi^{\mu}$,
the resulting variation must vanish:
\begin{equation}
\delta_\xi A\left[g\right]=-\int\frac{\delta A\left[g\right]}{\delta g^{\mu\nu}(z)}\nabla^{(\mu}\xi^{\nu)}\mathrm{d}V(z)=\int\xi^{\nu}\nabla_{\mu}\frac{\delta A\left[g\right]}{\delta g^{\mu\nu}}\mathrm{d}V=0,
\end{equation}
therefore $\nabla_{\mu}\frac{\delta A\left[g\right]}{\delta g^{\mu\nu}(z)}=0$.
If $A\left[g\right](x_{1},\ldots,x_{r})$ is also a function of several
points, we can similarly derive the Ward-type identity
\begin{equation}
\nabla_{\mu}\frac{\delta A\left[g\right](x_{1},\ldots,x_{r})}{\delta g^{\mu\nu}(z)}=\frac{1}{2}\sum_{i=1}^{r}\delta_{x_{i}}\left(z\right)\nabla_{\nu}A\left[g\right]\left(\ldots,x_{i-1},z,x_{i+1},\ldots\right),\label{eq: Ward}
\end{equation}
where all derivatives act on $z$. For instance, using (\ref{eq: G(x,y) variation}),
we can instantly verify that 
\begin{equation}
\nabla^{\mu}\frac{\delta G\left(x,y\right)}{\delta g^{\mu\nu}(z)}=\frac{1}{2}\delta_{x}(z)\nabla_{\nu}G\left(z,y\right)+\frac{1}{2}\delta_{y}(z)\nabla_{\nu}G\left(z,x\right).
\end{equation}
The Ward identity (\ref{eq: Ward}) is also helpful in computing
these variations in two dimensions. Namely, conformal variations are
often known or easy to compute, which gives the trace of the variation.
The Ward identity then reduces the problem to finding a traceless
symmetric tensor with a given divergence, which is a $\bar{\partial}$-problem
in local coordinates (since $\nabla^{\mu}f_{\mu\nu}=2e^{-2\sigma}\bar{\partial}_zf_{zz}$
for a traceless symmetric tensor $f_{\mu\nu}$). Such a problem has a unique solution
up to a holomorphic quadratic differential, and in particular unique on a sphere.

\section{Proofs of main results\label{sec: Appendix g=0}}

On genus zero, the Robin function satisfies 
\[-\Delta G^R=\frac{R}{4\pi}-\frac{2}{V}.\label{laplacian of Robin on g=0}\]
The following key relationships between the curvature, the conformal factor $g_{\mu\nu}=e^{2\sigma}\mathring{g}_{\mu\nu}$, and the potential (\ref{eq: Kahler definition}) hold on genus zero and follow
from (\ref{eq: Kahler definition}, \ref{laplacian of Robin on g=0}):
\begin{gather}
U=\left(-\Delta\right)^{-1}\frac{R}{4\pi}=(K-\wb{K})+\frac{\sigma-\wb{\sigma}}{2\pi},\\
G^{R}=U+C,\quad C=C\left[g\right]=\mathrm{const}.\label{eq: relations for G^R}
\end{gather}
The bars denote averaging over the surface. 
Therefore, at least on genus zero, we can easily vary $G^{R}$ once we know how
to vary the Green function, the scalar curvature, and the constant
$C$. However, to obtain a universal answer for any genus, we will follow a different tactic: we will make an educated guess of the answer and prove it by verifying the trace and the divergence. But first we need a technical result that is at the heart of coarse-graining.

From now on, we use $\left[a_{\mu\nu}-\mathrm{tr}\right]$ as a shorthand
notation for the traceless part $a_{\mu\nu}-\frac{1}{2}g_{\mu\nu}g^{\rho\lambda}a_{\rho\lambda}$
(unambiguous no matter which of the conformal metrics we use).

\begin{lem}[Generalization of algebraic identity (19) in \cite{WiegmAban14}]
    On a Riemann surface of genus $\mathsf{g}$ with a metric $g$ of total volume $V$ and Green's function $G$, fix a point $x$ and consider the following traceless symmetric tensor as a functional of the metric:
    \begin{equation}
        \Phi_{x,\mu\nu}\coloneqq-\left[\nabla_{\mu}G_{x}\nabla_{\nu}G_{x}-\mathrm{tr}\right]+\frac{1}{4\pi}\left[\nabla_{\mu}\nabla_{\nu}G_{x}-\mathrm{tr}\right],
    \end{equation}
    or in complex conformal coordinates
    \begin{equation}
        \Phi_{x,zz}(z)=-\left(\partial G_{x}\right)^{2}+\frac{1}{4\pi}\partial^{2}G_{x}-\frac{1}{4\pi}\left(\partial\ln\sqrt{g}\right)\partial G_{x}.\label{Phi in complex}
    \end{equation}
    Then
    \begin{enumerate}
        \item Denoting $\mathring{\Phi}_{x,\mu\nu}=\Phi_{x,\mu\nu}[\mathring{g}]$ for the reference metric $\mathring{g}$ of constant curvature, we claim that $\mathring{\Phi}_{\mu\nu}$ is locally integrable and
        \begin{equation}
            \mathring{\nabla}^{\mu}\mathring{\Phi}_{x,\mu\nu}=\frac12\mathring\delta_x\nabla_\nu \mathring{G}^R-\frac{\mathsf{g}}{\mathring{V}}{\nabla}_{\nu}\mathring{G}_{x}.\label{eq: divergence of Phi^circ}
        \end{equation}
        Moreover, $\mathring{\Phi}_{x,\mu\nu}=0$ in the infinite plane and
        on the round sphere. The flat torus case will be stated in (\ref{Phi on flat torus}).
        \item If $g=e^{2\sigma}\mathring{g}$, and $K$ is the potential (\ref{eq: Kahler definition}), then $\Phi_{x,\mu\nu}$ is locally integrable and
        \begin{multline}
            \Phi_{x,\mu\nu}-\mathring{\Phi}_{x,\mu\nu}=-\left[\nabla_{(\mu}G_{x}\nabla_{\nu)}\left(K+\frac{\sigma}{2\pi}\right)-\mathrm{tr}\right]+\\
            +\frac{1}{2V}f_{\mu\nu},
        \end{multline}
        where 
        \begin{equation}
            \frac{1}{V}f_{\mu\nu}=\left[\frac{1}{2}{\nabla}_{\mu}K{\nabla}_{\nu}K-\mathrm{tr}\right]+\left[\frac{1}{4\pi}\mathring{\nabla}_{\mu}{\nabla}_{\nu}K-\mathrm{tr}\right].\label{eq: f_munu formula}
        \end{equation}
        In addition, its divergence is
        \begin{equation}
            \nabla^{\mu}\Phi_{x,\mu\nu}=\frac{1}{2}\delta_{x}\nabla_{\nu}G^R+\left(\frac{R}{8\pi}-\frac{1}{V}\right)\nabla_{\nu}G_{x}.\label{eq: divergence of Phi}
        \end{equation}
    \end{enumerate}
\end{lem}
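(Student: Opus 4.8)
The statement has three ingredients: the divergence formula (\ref{eq: divergence of Phi}), which contains the constant-curvature identity (\ref{eq: divergence of Phi^circ}) as a special case; the vanishing of $\mathring{\Phi}_{x,\mu\nu}$ on the plane and the round sphere; and the conformal comparison formula together with (\ref{eq: f_munu formula}). The plan is to establish (\ref{eq: divergence of Phi}) first by a direct computation, deduce part 1 from it, and treat the conformal comparison last.

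\emph{Divergence.} Away from $x$ the field $G_{x}$ is smooth. Since the Hessian is symmetric, $\nabla^{\mu}[-\nabla_{\mu}G_{x}\nabla_{\nu}G_{x}-\mathrm{tr}]=-\Delta G_{x}\,\nabla_{\nu}G_{x}$, and the two-dimensional Bochner identity $\nabla^{\mu}\nabla_{\mu}\nabla_{\nu}f=\nabla_{\nu}\Delta f+\frac{1}{2}R\,\nabla_{\nu}f$ (which holds because $R_{\mu\nu}=\frac{1}{2}Rg_{\mu\nu}$ in 2D) gives $\nabla^{\mu}[\frac{1}{4\pi}(\nabla_{\mu}\nabla_{\nu}G_{x}-\mathrm{tr})]=\frac{1}{8\pi}\nabla_{\nu}\Delta G_{x}+\frac{R}{8\pi}\nabla_{\nu}G_{x}$. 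With $-\Delta G_{x}=\delta_{x}-1/V$ this produces $(\frac{R}{8\pi}-\frac{1}{V})\nabla_{\nu}G_{x}$ away from $x$, plus a distribution $\delta_{x}\nabla_{\nu}G_{x}-\frac{1}{8\pi}\nabla_{\nu}\delta_{x}$ supported at $x$. To make sense of the latter I would first note that in a local conformal coordinate $G_{x}(z)+\frac{1}{2\pi}\ln|z-x|$ extends smoothly across $z=x$, because applying $-\Delta$ to it yields the smooth function $-1/V$ (the two Dirac terms cancel). Hence $\partial G_{x}=-\frac{1}{4\pi(z-x)}+O(1)$ and $\partial^{2}G_{x}=\frac{1}{4\pi(z-x)^{2}}+O(1)$, so the double poles of $-(\partial G_{x})^{2}$ and $\frac{1}{4\pi}\partial^{2}G_{x}$ in (\ref{Phi in complex}) cancel and $\Phi_{x,zz}=\frac{A_{x}}{z-x}+O(1)$ is locally integrable. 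The distributional divergence is then obtained by excising a small geodesic disk $B_{\epsilon}(x)$, integrating by parts on the complement, and letting $\epsilon\to0$: the bulk term tends to $(\frac{R}{8\pi}-\frac{1}{V})\nabla_{\nu}G_{x}$ and the boundary integral over $\partial B_{\epsilon}(x)$ tends to $\frac{1}{2}\xi^{\nu}(x)\nabla_{\nu}G^{R}(x)$ against any test covector $\xi^{\nu}$. The coefficient $\frac{1}{2}$, i.e.\ $A_{x}=\frac{1}{4\pi}\partial G^{R}(x)$, falls out of the first-order expansion of $G_{x}$ near $x$: the function $F(x,y)=G(x,y)+\frac{1}{2\pi}\ln d(x,y)$ is smooth near the diagonal with $F(y,y)=G^{R}(y)$, so $\nabla G^{R}(x)=(\nabla_{1}+\nabla_{2})F(x,x)=2\nabla_{2}F(x,x)$ by the symmetry $G(x,y)=G(y,x)$; this fixes the linear term of $G_{x}$ around $x$ in terms of $\nabla G^{R}(x)$ and $\nabla\sigma(x)$, and the $\nabla\sigma$ piece is exactly cancelled by the $\frac{1}{4\pi}(\partial\ln\sqrt{g})\partial G_{x}$ term in (\ref{Phi in complex}). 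This gives (\ref{eq: divergence of Phi}).

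\emph{Part 1.} For $g=\mathring{g}$ of constant curvature, $\mathring{R}=4\pi\chi/\mathring{V}$ with $\chi=2-2\mathsf{g}$, so $\frac{\mathring{R}}{8\pi}-\frac{1}{\mathring{V}}=-\frac{\mathsf{g}}{\mathring{V}}$ and (\ref{eq: divergence of Phi}) becomes (\ref{eq: divergence of Phi^circ}); local integrability carries over. On the plane $\mathring{G}_{x}=-\frac{1}{2\pi}\ln|z-x|$ and a one-line evaluation of (\ref{Phi in complex}) gives $\mathring{\Phi}_{x,\mu\nu}\equiv0$. On the round sphere $\mathring{G}^{R}$ is constant by homogeneity, so the coefficient $A_{x}$ above vanishes and $\mathring{\Phi}_{x,zz}$ stays bounded across $x$; away from $x$ its divergence is zero (genus $0$), i.e.\ $\bar{\partial}\mathring{\Phi}_{x,zz}=0$, so $\mathring{\Phi}_{x,zz}$ is a global holomorphic quadratic differential on $\mathbb{CP}^{1}$ and hence $0$. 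The flat-torus formula (\ref{Phi on flat torus}) is derived separately from the explicit doubly-periodic Green function.

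\emph{Part 2, and the main obstacle.} The inputs are the conformal transformation laws under $g=e^{2\sigma}\mathring{g}$. From the defining equations for $G$, $\mathring{G}$ and the potential (\ref{eq: Kahler definition}), together with $\Delta=e^{-2\sigma}\mathring{\Delta}$ (so $\delta_{x}=e^{-2\sigma}\mathring{\delta}_{x}$), one finds $-\mathring{\Delta}(G_{x}-\mathring{G}_{x}-\frac{1}{2}K)=0$, hence $G_{x}=\mathring{G}_{x}+\frac{1}{2}(K(x)+K(z))+\mathrm{const}$ (the symmetric form uses $G(x,z)=G(z,x)$) and $\nabla_{\nu}G_{x}=\mathring{\nabla}_{\nu}\mathring{G}_{x}+\frac{1}{2}\nabla_{\nu}K$. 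Combined with the conformal invariance of the traceless bracket (since $g_{\mu\nu}g^{\rho\lambda}=\mathring{g}_{\mu\nu}\mathring{g}^{\rho\lambda}$) and the Hessian transformation $[\nabla_{\mu}\nabla_{\nu}f-\mathrm{tr}]=[\mathring{\nabla}_{\mu}\mathring{\nabla}_{\nu}f-\mathrm{tr}]-2[\nabla_{(\mu}\sigma\nabla_{\nu)}f-\mathrm{tr}]$ (the Christoffel difference is built only from $\partial\sigma$, and its trace part drops), one substitutes into the definition of $\Phi_{x,\mu\nu}$, subtracts $\mathring{\Phi}_{x,\mu\nu}$, and collects; the cross terms in $\nabla\mathring{G}_{x}$, $\nabla K$ and $\nabla\sigma$ recombine — reading $\nabla\mathring{G}_{x}=\nabla G_{x}-\frac{1}{2}\nabla K$ backwards — into $-[\nabla_{(\mu}G_{x}\nabla_{\nu)}(K+\frac{\sigma}{2\pi})-\mathrm{tr}]$, while the remaining purely-$K$ part is exactly $\frac{1}{2V}f_{\mu\nu}$ with $f_{\mu\nu}$ as in (\ref{eq: f_munu formula}) (using $-\mathring{\Delta}K=2/\mathring{V}-2e^{2\sigma}/V$ to recognise the $1/V$ normalisation). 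As a consistency check, (\ref{eq: divergence of Phi}) can also be re-derived from (\ref{eq: divergence of Phi^circ}) by computing the divergence of this comparison term. Throughout, the main obstacle is the distributional analysis at $z=x$: proving local integrability of $\Phi_{x,\mu\nu}$ and, above all, pinning down the coefficient $\frac{1}{2}$ in the $\delta_{x}\nabla_{\nu}G^{R}$ term, which requires controlling the first-order term of the local expansion of $G_{x}$ and combining it correctly with the $\partial\ln\sqrt{g}$ contribution; the remaining manipulations are lengthy but routine two-dimensional Riemannian algebra.
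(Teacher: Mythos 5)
Your proof is correct and follows essentially the same route as the paper: the local conformal expansion of $G_x$ to cancel the double poles and extract the simple-pole residue $\frac{1}{4\pi}\partial G^R(x)$ (your symmetry argument $\nabla G^R(x)=2\nabla_2 F(x,x)$ is exactly what justifies the paper's symmetrized short-distance expansion of $\mathring G_x$), followed by the conformal transformation rules for $G$ and for the Christoffel symbols in part 2. The only cosmetic differences are the ordering (you prove the general divergence formula first and obtain the constant-curvature case as a corollary, while the paper goes the other way) and your cleaner vanishing argument on the round sphere via the triviality of holomorphic quadratic differentials on $\mathbb{CP}^1$, which the paper only invokes implicitly.
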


\begin{proof}
    Since locally 
    \begin{multline}
        \mathring{G}_{x}(z)=-\frac{1}{2\pi}\ln\left|z-x\right|-\frac{1}{4\pi}\ln\sqrt{\mathring{g}(z)}-\frac{1}{4\pi}\ln\sqrt{\mathring{g}(x)}+\\+\frac12 \mathring{G}^R(z)+\frac12 \mathring{G}^R(x)+\mathcal{O}(z-x)
    \end{multline}
    with a smooth error term, the absence
    of second order poles in $\mathring{\Phi}_{\mu\nu}$ follows from
    the identity 
    \begin{equation}
        -\left(\frac{1}{2\pi}\partial\ln\left|z-x\right|\right)^{2}-\frac{1}{4\pi}\partial^{2}\left(\frac{1}{2\pi}\ln\left|z-x\right|\right)=0.
    \end{equation}
    Therefore $\mathring{\Phi}_{\mu\nu}$ is well-defined as a distribution
    and in turn has a well-defined divergence. The value of the divergence $\frac{2}{\sqrt{g}}\bar\partial \mathring{\Phi}_{x,zz}$
    away from $x$ is immediately found to be the second term in (\ref{eq: divergence of Phi^circ}).
    The rest has to be concentrated at $z=x$. The only delta-like term
    can come from the first order pole in $\mathring{\Phi}_{zz}$ and by substitution of the above short distance expansion of the Green function into (\ref{Phi in complex})
    we find it to be equal to $\mathring{\delta}_{x}(z)\frac{1}{2}\partial \mathring{G}^R$.

    The second part of the proposition follows from the transformation
    rule $G\left(x,y\right)=\mathring{G}\left(x,y\right)+\frac{1}{2}K(x)+\frac{1}{2}K(y)+\mathrm{const}$
    and the transformation rule for the Christoffel symbols
    \begin{gather}
        \left(\nabla_{\mu}-\mathring{\nabla}_{\mu}\right)a_{\nu}=-C_{\mu\nu}^{\lambda}a_{\lambda},\\
        C_{\mu\nu}^{\lambda}=\delta_{\mu}^{\lambda}\partial_{\nu}\sigma+\delta_{\nu}^{\lambda}\partial_{\mu}\sigma-g_{\mu\nu}g^{\lambda\rho}\partial_{\rho}\sigma,\;\nabla^{\mu}C_{\mu\nu}^{\lambda}=\delta_{\nu}^{\lambda}\Delta\sigma.
    \end{gather}
    An alternative field-theoretical derivation of these identities, as well as a more explicit formula for $\Phi_{x,zz}$, can be found in \cite{KuzenSolov91}.
\end{proof}
This Lemma defines $\nabla_\mu G_x\nabla_\nu G_x$ as a distribution: the double derivative $\nabla_\mu\nabla_\nu G_x$ can be understood as a second derivative of a regular distribution (which is always well-defined), and $\Phi_{\mu\nu}$ is a regular distribution too. This gives a precise distributional meaning to the vortex stress $P_{\mu\nu}$ and finalizes the statement of Proposition \ref{prop: Stress}.

\begin{lem}\label{prop: variation of Robin}
    On surfaces of any genus,
    \begin{multline}
        \frac{\delta G^{R}(x)}{\delta g^{\mu\nu}(z)}=-\left[\nabla_\mu G_x\nabla_\nu G_x-\tr\right]+h_{\mu\nu}+\frac{1}{2}g_{\mu\nu}\left(-\frac{\delta_{x}}{4\pi}+\frac{2}{V}G_x\right),\label{eq: variation of Robin after Phi}
    \end{multline}
    where $h_{\mu\nu}(z)$ is a (possibly $x$-dependent) holomorphic quadratic differential (i.e.\ a divergenceless and traceless symmetric tensor). On genus zero surfaces $h_{\mu\nu}=0$ and on genus one surfaces $h_{\mu\nu}$ is $x$-independent. The first term in the variation has to be understood as the distribution equal to $\Phi_{\mu\nu}-(\nabla_\mu\nabla_\nu G_x-\tr)/4\pi$.
\end{lem}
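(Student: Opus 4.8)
The plan is to prove (\ref{eq: variation of Robin after Phi}) by the ``guess-and-verify'' strategy announced just before the Lemma. A symmetric two-tensor field on a surface is determined by its trace (a scalar) together with its total divergence (a covector), the latter only up to a holomorphic quadratic differential, since in a local conformal coordinate the divergence of the traceless part is $2e^{-2\sigma}\bar\partial$ of its $zz$-component. Hence it suffices to check that the right-hand side of (\ref{eq: variation of Robin after Phi}), with $h_{\mu\nu}$ dropped, has (i) the same trace and (ii) the same divergence as $\delta G^R(x)/\delta g^{\mu\nu}(z)$; any discrepancy that remains is then traceless and divergenceless, i.e.\ a holomorphic quadratic differential, which we name $h_{\mu\nu}$. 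On genus zero there are no nonzero holomorphic quadratic differentials, so there $h_{\mu\nu}=0$ automatically, and on genus one the $x$-independence still has to be argued separately.

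For the trace I would restrict to conformal variations $\delta g_{\mu\nu}=2\,\delta\sigma\,g_{\mu\nu}$, i.e.\ $\delta g^{\mu\nu}=-2\,\delta\sigma\,g^{\mu\nu}$, and compute $\delta G^R(x)$ from first principles. Linearizing the conformal transformation law $G(x,y)=\mathring{G}(x,y)+\tfrac12 K(x)+\tfrac12 K(y)+\mathrm{const}$ with $K$ as in (\ref{eq: Kahler definition}) gives $\delta G(x,y)=-\tfrac{2}{V}\big[((-\Delta)^{-1}\delta\sigma)(x)+((-\Delta)^{-1}\delta\sigma)(y)\big]$ (the additive constant drops out once $\int G\,\dd V=0$ is imposed), while $d_{g'}(x,y)=e^{\delta\sigma(x)}d_g(x,y)(1+O(d(x,y)))$ gives $\delta\big(\tfrac1{2\pi}\ln d(x,y)\big)\to\tfrac1{2\pi}\delta\sigma(x)$ as $y\to x$. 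Adding the two contributions, $\delta G^R(x)=\tfrac1{2\pi}\delta\sigma(x)-\tfrac{4}{V}((-\Delta)^{-1}\delta\sigma)(x)$, which is exactly what one gets by contracting the candidate trace $-\delta_x/4\pi+(2/V)G_x$ against $-2\,\delta\sigma\,g^{\mu\nu}$ and integrating. (For $\mathsf g=0$ the same can be read off from (\ref{eq: relations for G^R}).)

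For the divergence I would use the distributional identity from the preceding Lemma, $-[\nabla_\mu G_x\nabla_\nu G_x-\mathrm{tr}]=\Phi_{x,\mu\nu}-\tfrac1{4\pi}[\nabla_\mu\nabla_\nu G_x-\mathrm{tr}]$, and take the divergence of the right-hand side of (\ref{eq: variation of Robin after Phi}) term by term: the divergence of $\Phi_{x,\mu\nu}$ is (\ref{eq: divergence of Phi}); the divergence of $\tfrac1{4\pi}[\nabla_\mu\nabla_\nu G_x-\mathrm{tr}]$ follows from the two-dimensional Bochner identity $\nabla^\mu[\nabla_\mu\nabla_\nu f-\tfrac12 g_{\mu\nu}\Delta f]=\tfrac12\nabla_\nu\Delta f+\tfrac12 R\,\nabla_\nu f$ with $f=G_x$ and $-\Delta G_x=\delta_x-1/V$; and the divergence of the trace term $\tfrac12 g_{\mu\nu}(-\delta_x/4\pi+(2/V)G_x)$ is immediate. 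Summing, the $\nabla_\nu\delta_x$ terms cancel, the $R\,\nabla_\nu G_x$ terms cancel, and the $(1/V)\nabla_\nu G_x$ terms cancel, leaving exactly $\tfrac12\delta_x\nabla_\nu G^R$. On the other hand $G^R(x)$ is a diffeomorphism-covariant functional of $g$ with one marked point, so the Ward identity (\ref{eq: Ward}) gives $\nabla^\mu\frac{\delta G^R(x)}{\delta g^{\mu\nu}(z)}=\tfrac12\delta_x(z)\nabla_\nu G^R(z)$ — the same covector. Trace and divergence therefore agree, which establishes (\ref{eq: variation of Robin after Phi}) for $\mathsf g=0$.

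The main obstacle is the case $\mathsf g\ge1$. First, a distributional point: $\nabla_\mu G_x\nabla_\nu G_x$ has a non-integrable $1/|z-x|^2$ singularity, so the displayed identity is really the definition of this term — $\nabla_\mu\nabla_\nu G_x$ is the distributional second derivative of the $L^1_{\mathrm{loc}}$ field $\nabla G_x$, and $\Phi_{x,\mu\nu}$ was shown in the previous Lemma to be a genuine locally integrable distribution — and this is exactly the interpretation the statement refers to. Second, for $\mathsf g=1$ one must show the residual $h_{\mu\nu}$ does not depend on $x$: a holomorphic quadratic differential on a torus is constant in the flat coordinate, so $h_{\mu\nu}$ is a single function of $x$ and the modulus, and I would pin it down by rerunning the trace-and-divergence argument relative to the flat reference metric $\mathring{g}$ — for which $\mathring{G}^R$ is $x$-independent by translation invariance (so the Ward identity forces the whole variation to be divergence-free) and $\mathring{\Phi}_{x,\mu\nu}$ is available in closed form (\ref{Phi on flat torus}) — and then transporting the answer to a general metric in the conformal class via the transformation rules of that Lemma. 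Carrying out this transport carefully, or equivalently verifying $\partial_x h_{\mu\nu}=0$ by varying the gradient of the Robin function, is the one genuinely delicate step; for $\mathsf g\ge2$ the holomorphic quadratic differential $h_{\mu\nu}$ simply remains an unknown, as the statement concedes. As a cross-check, (\ref{eq: variation of Robin after Phi}) can also be recovered for all genera by letting $y\to x$ in the exact two-point formula (\ref{eq: G(x,y) variation}) after adding the variation of $\tfrac1{2\pi}\ln d(x,y)$, the direction-dependent pieces cancelling between the two terms — though making that limit precise as a distributional identity in $z$ near $z=x$ is itself fiddly, which is why the guess-and-verify route is preferred.
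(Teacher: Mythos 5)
Your proposal is correct and follows essentially the same route as the paper's proof: one checks that the candidate expression has the right trace (from the conformal variation of $G$ and of the short-distance regularization, reproducing (\ref{eq: trace variation of G^R general})) and the right divergence (matching the Ward identity via the divergence of $\Phi_{x,\mu\nu}$ from the preceding Lemma), so the discrepancy is a holomorphic quadratic differential, which vanishes on genus zero and whose $x$-independence on the torus is deferred to Appendix \ref{sec: Appendix g=1}. You merely make explicit the Bochner-identity bookkeeping and the geodesic-distance scaling that the paper's terse proof leaves implicit.
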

\begin{proof}
    Using the previous Lemma, we directly check the Ward identity
    \[\nabla^\mu \frac{\delta G^R(x)}{\delta g^{\mu\nu}}=\frac12\delta_x \nabla_\nu G^R.\]
    Moreover, the trace part of the variation (corresponding to conformal deformations of the metric) follows directly from the definition of $G^R$ and Proposition \ref{prop: variation of Green}:
    \begin{equation}
        -g^{\mu\nu}\frac{\delta G^{R}(x)}{\delta g^{\mu\nu}(z)}=\frac{1}{2}\frac{\delta G^{R}(x)}{\delta\sigma(z)}=-\frac{2}{V}G\left(x,z\right)+\frac{1}{4\pi}\delta_{x}(z).\label{eq: trace variation of G^R general}
    \end{equation}
    Therefore $\delta G^R/\delta g^{\mu\nu}$ can differ from the stated result only by a traceless divergenceless term. On genus zero surfaces such a term has to be zero. We show that this term also has to be $x$-independent (and therefore irrelevant to the physics of vortices) at least on genus one surfaces in Appendix \ref{sec: Appendix g=1}.
\end{proof}


\begin{lem}\label{prop: variation of U}
    The metric variation of the curvature potential (\ref{eq: def of curvature potential}) is 
    \begin{multline}
        \frac{\delta U(x)}{\delta g^{\mu\nu}(z)}
        =-\frac{1}{4\pi}\left[\nabla_{\mu}\nabla_{\nu}G_{x}-\mathrm{tr}\right]-\left[\nabla_{(\mu}G_{x}\nabla_{\nu)}U-\mathrm{tr}\right]+\\
        +\frac{1}{2}g_{\mu\nu}\left(-\frac{\delta_{x}}{4\pi}+\frac{\chi}{V}G_{x}+\frac{1}{V}U+\frac{1}{4\pi V}\right)\label{eq: variation of curvature potential}
    \end{multline}
\end{lem}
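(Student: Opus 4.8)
\emph{Proof plan.} The plan is to compute $\delta U/\delta g^{\mu\nu}$ directly from the representation $U(x)=\int G(x,w)\,\frac{R(w)}{4\pi}\,\dd V(w)$ implied by (\ref{eq: def of curvature potential}), splitting the variation into the contribution of $\delta G$ and the contribution of the variation of the ``source'' $\frac{R}{4\pi}\,\dd V$. Both are completely explicit — the first through the Hadamard formula of Proposition \ref{prop: variation of Green}, the second through the standard Einstein--Hilbert variation — so, unlike the Robin function in Lemma \ref{prop: variation of Robin}, the answer for $U$ carries no holomorphic-quadratic-differential ambiguity on any genus, and the proof reduces to bookkeeping. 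The only mild technical point is that $\nabla_\mu\nabla_\nu G_x$ must be read as the distributional second derivative of the $L^1$ co-vector field $\nabla_\nu G_x$ (always well defined), which also legitimizes the single integration by parts below; every other object appearing is locally integrable since $U$ is smooth when $R$ is.

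\emph{The source contribution.} In two dimensions the Einstein tensor vanishes, $R_{\mu\nu}-\tfrac12 g_{\mu\nu}R=0$, so the standard variation of the curvature density collapses to $\delta(\sqrt g\,R)=\sqrt g\,(g_{\mu\nu}\Delta-\nabla_\mu\nabla_\nu)\,\delta g^{\mu\nu}$, where $\Delta=\nabla^\rho\nabla_\rho$ is the Laplace--Beltrami operator of the final metric. Pairing $G_x$ against this and integrating by parts (legitimate since $\delta g^{\mu\nu}$ is a smooth test tensor and $\Delta G_x=-\delta_x+1/V$, $\nabla_\mu\nabla_\nu G_x$ are honest distributions) gives, for the $\frac{R}{4\pi}\,\dd V$ piece of $\delta U(x)$, the expression $\tfrac1{4\pi}\bigl(g_{\mu\nu}\Delta G_x-\nabla_\mu\nabla_\nu G_x\bigr)$. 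Splitting into trace and traceless parts and inserting $\Delta G_x=-\delta_x+1/V$ turns this into $-\tfrac1{4\pi}\bigl[\nabla_\mu\nabla_\nu G_x-\mathrm{tr}\bigr]+\tfrac12 g_{\mu\nu}\bigl(-\tfrac{\delta_x}{4\pi}+\tfrac{1}{4\pi V}\bigr)$, which already produces the first term of (\ref{eq: variation of curvature potential}) and two of the four terms of its trace.

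\emph{The $\delta G$ contribution.} Substituting the Hadamard formula (\ref{eq: G(x,y) variation}) for $\delta G(x,w)/\delta g^{\mu\nu}$ into $\delta U(x)=\int\frac{\delta G(x,w)}{\delta g^{\mu\nu}}\,\frac{R(w)}{4\pi}\,\dd V(w)$ and performing the $w$-integral with the two contractions $\tfrac1{4\pi}\int R\,\dd V=\chi$ (Gauss--Bonnet) and $\tfrac1{4\pi}\int G(\cdot,w)R(w)\,\dd V(w)=U$, the traceless term $-\bigl[\nabla_{(\mu}G_x\nabla_{\nu)}G_w-\mathrm{tr}\bigr]$ becomes $-\bigl[\nabla_{(\mu}G_x\nabla_{\nu)}U-\mathrm{tr}\bigr]$ (pull $\nabla_\nu$, acting on $z$, out of the $w$-integral) and the trace term $\tfrac1{2V}g_{\mu\nu}(G_x+G_w)$ becomes $\tfrac12 g_{\mu\nu}\bigl(\tfrac{\chi}{V}G_x+\tfrac1V U\bigr)$. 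Adding this to the source contribution reproduces exactly (\ref{eq: variation of curvature potential}).

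\emph{Cross-check and the main difficulty.} As an independent verification one may instead take (\ref{eq: variation of curvature potential}) as an ansatz and confirm its trace and divergence. Contracting with $g^{\mu\nu}$ must reproduce $\tfrac12\,\delta U/\delta\sigma$, read off from the conformal variations $\delta(\sqrt g\,R)=-2\sqrt g\,\Delta\sigma$ and $\delta G/\delta\sigma$ (the trace of (\ref{eq: G(x,y) variation})); and $\nabla^\mu$ applied to (\ref{eq: variation of curvature potential}) must reproduce the Ward identity (\ref{eq: Ward}) for the one-point geometric functional $U$, namely $\nabla^\mu\,\delta U(x)/\delta g^{\mu\nu}(z)=\tfrac12\,\delta_x(z)\,\nabla_\nu U(z)$, where the only non-routine inputs are $\Delta G_x=-\delta_x+1/V$, $\Delta U=-\tfrac{R}{4\pi}+\tfrac{\chi}{V}$, and the two-dimensional Bochner identity $\nabla^\mu(\nabla_\mu\nabla_\nu f)=\nabla_\nu\Delta f+\tfrac{R}{2}\nabla_\nu f$. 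The main obstacle is purely organisational: keeping the trace/traceless decomposition consistent throughout and making sure the integration by parts landing $\Delta$ and $\nabla_\mu\nabla_\nu$ on the logarithmically singular $G_x$ is interpreted distributionally; no single step is deep, but the accounting of the $\delta_x$, $G_x$, $U$ and constant pieces of the trace is where sign slips would occur.
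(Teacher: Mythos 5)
Your proof is correct and follows essentially the same route as the paper: writing $U(x)=\frac{1}{4\pi}\int G(x,w)R(w)\,\dd V(w)$, varying the curvature density via the standard total-derivative formula (your $\sqrt g\,(g_{\mu\nu}\Delta-\nabla_\mu\nabla_\nu)\delta g^{\mu\nu}$ is exactly the paper's $\nabla_\mu k^\mu\,\dd V$ after integration by parts against $G_x$), and varying $G$ via the Hadamard formula of Proposition \ref{prop: variation of Green}, with the contractions $\frac{1}{4\pi}\int R\,\dd V=\chi$ and $\frac{1}{4\pi}\int G(\cdot,w)R(w)\,\dd V(w)=U$. The only difference is that you spell out the bookkeeping and add a trace/divergence cross-check that the paper leaves implicit.
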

\begin{proof}
    The variation of $R$ is standard, 
    \begin{align}
    \delta\left(R\mathrm{d}V\right) & =\nabla_{\mu}k^{\mu}\mathrm{d}V,\\
    k^{\mu} & =-\nabla_{\nu}\delta g^{\mu\nu}+g_{\lambda\rho}g^{\mu\nu}\nabla_{\nu}\delta g^{\lambda\rho}.
    \end{align}
    This together with Proposition \ref{prop: variation of Green} gives the result, since $U(x)=\frac{1}{4\pi}\int G(x,z)R(z)\dd V(z)$.
\end{proof}

\begin{proof}[Proof of Proposition \ref{prop: Stress}]
Combining all of the above lemmas we find that
the variation of the total Hamiltonian (\ref{eq: discrete Hamiltonain (m)})
gives the stress tensor (\ref{eq: stress P}). There, the result is intentionally written in terms of the full stream function $\psi$ instead of just $\Gamma(-\Delta)^{-1}\rho$, which leads to the extra term
\begin{multline}
    \frac{1}{2m\Gamma^2}o_{\mu\nu}=\left[\nabla_\mu U\nabla_\nu U+\frac{1}{2\pi}\nabla_\mu\nabla_\nu U-\tr\right]+\\
    +\frac12 g_{\mu\nu}\left(\frac{R}{8\pi^2}-\frac{2\chi}{V}U-\frac{N}{4\pi V}\right).
\end{multline}
This term is divergenceless and can be eliminated by an addition to the Hamiltonian of a constant dependent only on the metric. On higher genus surfaces, $o_{\mu\nu}$ will also absorb $h_{\mu\nu}$ coming from the variation of $G^R$. The proof that $h_{\mu\nu}$ and therefore $o_{\mu\nu}$ remains $z_i$-independent on genus one surfaces will be given in Appendix \ref{sec: Appendix g=1}.
\end{proof}


\begin{lem}
    Let $H_N\left(\{z_i\}\right)$ be a symmetric function of the positions $z_i$ of $N$ identical particles, such that in the coarse-graining limit it is represented as a functional $H_{\rm CG}[\rho]$ of the coarse-grained density $\rho$. We also assume that $H_N$ and $H_{\rm CG}$ are known functionals of the metric $g_{\mu\nu}$ on the space on which the particles reside. Then
    \begin{equation}
        \rho\nabla_{\mu}\frac{\delta H_{\rm CG}}{\delta\rho}=2\nabla^{\nu}\frac{\delta H_{\rm CG}}{\delta g^{\mu\nu}}.
    \end{equation}
\end{lem}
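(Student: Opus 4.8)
The plan is to obtain this identity as the coarse-grained shadow of the exact geometric Ward identity (\ref{eq: Ward}). First I would apply (\ref{eq: Ward}) to $A=H_N$ with the $r=N$ marked points taken to be the particle positions $\{z_i\}$; this is legitimate since $H_N$ is assembled from diffeomorphism-covariant objects (the Green function, the Robin function, the curvature potential) and is therefore a bona fide geometric functional of the metric and the points. The identity then reads
\[
2\nabla^{\mu}\frac{\delta H_N}{\delta g^{\mu\nu}(z)}=\sum_{i=1}^{N}\delta_{z_i}(z)\,\nabla_{\nu}H_N(\dots,z,\dots),
\]
where in the $i$-th slot of $H_N$ we have inserted the free point $z$; because $\delta_{z_i}(z)$ localizes $z$ at $z_i$ and $H_N$ is symmetric in its arguments, the right-hand side equals $\sum_i\delta_{z_i}(z)\,\nabla_{z_i,\nu}H_N$.

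Next I would establish the dictionary between the discrete gradient $\nabla_{z_i}H_N$ and the functional gradient $\delta H_{\rm CG}/\delta\rho$. Displacing a single particle $z_i$ along a vector $\xi^{\mu}$ transports its unit mass, so the induced change in $\rho=\sum_j\delta_{z_j}$ is $\delta\rho=-\nabla_{\mu}\bigl(\xi^{\mu}\delta_{z_i}\bigr)$; pairing with $\delta H_{\rm CG}/\delta\rho$ and integrating by parts gives $\delta H_N=\xi^{\mu}\bigl[\nabla_{\mu}(\delta H_{\rm CG}/\delta\rho)\bigr](z_i)$, whereas evaluating the same displacement directly gives $\delta H_N=\xi^{\mu}\nabla_{z_i,\mu}H_N$. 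Hence $\nabla_{z_i,\nu}H_N=\bigl[\nabla_{\nu}(\delta H_{\rm CG}/\delta\rho)\bigr](z_i)$. Substituting this into the right-hand side of the displayed identity and using $\sum_i\delta_{z_i}(z)f(z_i)=\rho(z)f(z)$ collapses the sum into $\rho(z)\,\nabla_{\nu}(\delta H_{\rm CG}/\delta\rho)(z)$. Passing to the limit $N\to\infty$ then replaces $H_N$ by $H_{\rm CG}$ on the left (the metric being carried along as a spectator parameter throughout) and the empirical density by the continuous $\rho$ on the right, which is exactly $\rho\nabla_{\mu}(\delta H_{\rm CG}/\delta\rho)=2\nabla^{\nu}(\delta H_{\rm CG}/\delta g^{\mu\nu})$.

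The step I expect to require the most care is the interchange of the coarse-graining limit with the two functional differentiations, together with pinning down the precise sense in which the variations are taken. On the discrete side the metric variation is at fixed coordinate positions $\{z_i\}$, on the continuum side at fixed $\rho$, and these agree only once one fixes a consistent convention for the $\sqrt{g}$-weight in $\delta_{z_i}=\delta^{(2)}(z-z_i)/\sqrt{g}$; with the natural choice (so that $\int\rho\,\dd V=N$ holds for every metric) the two conventions line up and the discrete identity transfers verbatim. The other ingredient that must be made explicit, rather than proved here, is that $\nabla_{z_i}H_N\to[\nabla(\delta H_{\rm CG}/\delta\rho)](z_i)$ in the limit --- this is precisely the statement that the Kirchhoff dynamics converges to the hydrodynamic one, and it is already implicit in the hypothesis that $H_N$ ``is represented by'' $H_{\rm CG}$.
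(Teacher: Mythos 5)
Your argument is correct and is essentially the paper's own: both rest on the transport formula $\delta_\xi\rho=-\nabla_\mu(\rho\xi^\mu)$ for the empirical density together with diffeomorphism invariance in the form of the Ward identity (\ref{eq: Ward}). The paper simply runs the displacement argument once, directly in the continuum (comparing $\delta_\xi H$ with the variation induced by $\delta g^{\mu\nu}=-2\nabla^{(\mu}\xi^{\nu)}$), whereas you factor it through the discrete identity and then pass to the limit; the content is the same.
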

\begin{proof}
This is nothing but a manifestation of diffeomorphism invariance,
i.e.\ just a continuous form of the Ward identity (\ref{eq: Ward}).
It can be proven by considering an infinitesimal displacement of all
particles along a vector field $\xi^{\mu}$, given by the Lie derivative
of the density
\begin{align}
    \rho\left(x;\left\{ z_{i}\right\} \right) & =\sum_{i}\delta\left(x,z_{i}\right),\\
    \delta_{\xi}\rho\left(x;\left\{ z_{i}\right\} \right) & =\sum_{i}\xi^{\mu}\left(z_{i}\right)\partial_{z_{i},\mu}\delta\left(x,z_{i}\right).
\end{align}
Using the definition of the delta function and the integration properties
of covariant and Lie derivatives along $\xi$ (denoted $\nabla_{\xi(x)}=\xi^{\mu}(x)\nabla_{x,\mu}$
and $\partial_{\xi(z)}=\xi^{\mu}(z)\partial_{z,\mu}$, respectively),
we can verify the identity
\begin{multline}
    \delta_{\xi}\rho=\sum_{i}\partial_{\xi(z_{i})}\delta\left(x,z_{i}\right)=-\sum_{i}\left(\nabla_{\xi(x)}\delta_{z_{i}}(x)+\delta_{z_i}(x)\cdot\nabla_{\mu}\xi^{\mu}\right)=\\
    =-\sum_i\nabla_{\mu}\left(\xi^{\mu}(x)\delta_{z_i}(x)\right)=-\nabla_{\mu}\left(\rho\xi^{\mu}\right).
\end{multline}
Now we let this identity \emph{define} the variations of the coarse-grained density.  Then for any Hamiltonian
$H\left[\rho,g\right]$
\begin{align}
    \delta_{\xi}H & =\int\frac{\delta H}{\delta\rho}\delta_{\xi}\rho\,\mathrm{d}V=\int\rho\xi^{\mu}\nabla_{\mu}\frac{\delta H}{\delta\rho}\,\mathrm{d}V.
\end{align}
On the other hand, we have the diffeomorphism-induced metric variation
$\delta g^{\mu\nu}=-\nabla^{\mu}\xi^{\nu}-\nabla^{\nu}\xi^{\mu}$,
giving the metric variation
\begin{gather}
    \delta_{g}H=\frac{1}{2}\int P_{\mu\nu}\delta g^{\mu\nu}\mathrm{d}V=\int\xi^{\mu}\nabla^{\nu}P_{\nu\mu}\mathrm{d}V.
\end{gather}
Diffeomorphism-invariance is exactly the statement that $\delta_\xi H=\delta_g H$. Comparing the two variations for all $\xi^{\mu}$, we get $\nabla^{\nu}P_{\nu\mu}=\rho\nabla_{\mu}\frac{\delta H}{\delta\rho}$.
\end{proof}

\begin{proof}[Proof of (\ref{coarse-grained v in terms of H})]
    The formula for the coarse-grained flow follows directly from the previous lemma. The velocity of the particles, as prescribed by the symplectic structure, is $\Gamma\rho\wh v^{\ast}_\mu=-\rho\nabla_{\mu}\frac{\delta H}{\delta\rho}$, which gives the result after dividing by $\rho$.
\end{proof}

\begin{proof}[Proof of Proposition \ref{prop: Coarse-grained flow}]
    On the discrete level, the divergence of $P_{\mu\nu}$ has to be computed carefully by referring to $\Phi_{\mu\nu}$, and gives the exact Kirchhoff equations. But now we use the power of the representation of $P_{\mu\nu}$ in terms of $\rho$ to coarse-grain. Assuming that the flow has been coarse-grained ($u^\mu$ no longer has singularities), we compute the divergence directly using the final expression for $P_{\mu\nu}$:
    \begin{equation}
        \nabla^{\mu}P_{\mu\nu}=\Gamma^2\rho\nabla(-\Delta)^{-1}\rho-m\Gamma^{2}\rho\nabla U-\frac{\Gamma^{2}}{8\pi}\nabla\rho,
    \end{equation}
    We can divide both sides by the coarse-grained $\rho$, proving the general coarse-graining formula. The stated mean-field
    equation is obtained by taking a curl of $v^\mu$ and equating it to zero.
\end{proof}

\section{Torus\label{sec: Appendix g=1}}

In this section we rederive the expression for the stress tensor on genus one surfaces, and in particular we show that the term $o_{\mu\nu}$ remains $z_i$-independent. Tori are special among all closed surfaces in that we
can in fact obtain a formula for the metric variation of $\sigma(x)$,
$K(x)$ etc. They have $\mathsf{g}=1$ and $\chi=0$, so the general relation
\begin{gather}
    U=\left(-\Delta\right)^{-1}\frac{R}{4\pi}=\frac{\sigma-\wb{\sigma}}{2\pi}+\frac{\chi}{2}(K-\wb{K}),\label{eq: curvature potential as sigma and K}
\end{gather}
simply states that $\sigma=2\pi U$, therefore
we already know how to vary it. Finally, on a torus, just like on a sphere, $G^{R}=\frac{\sigma}{2\pi}+K+C[g]$, whose variation is now computable since  $K=\frac{2}{\mathring{V}}\left(-\Delta\right)^{-1}e^{-2\sigma}$ is expressed in terms of $\sigma$.
The only remaining unknown will be the traceless part of the variation
of $C[g]$ (the total trace of the variation is already
known from (\ref{eq: trace variation of G^R general})). However,
this part will be a smooth tensor independent of $x$ that is not physically relevant anyway.

Since the variation of $\sigma=2\pi U$ follows from (\ref{eq: variation of curvature potential})
with $\chi=0$,
\begin{multline}
\frac{\delta\left(2\sigma(x)\right)}{\delta g^{\mu\nu}(z)}=-\left[\nabla_{(\mu}G_{x}\nabla_{\nu)}\left(2\sigma\right)-\frac{1}{2}g_{\mu\nu}\nabla_{\rho}G_{x}\nabla^{\rho}\left(2\sigma\right)\right]-\\
-\left[\nabla_{\mu}\nabla_{\nu}G_{x}-\frac{1}{2}g_{\mu\nu}\Delta G_{x}\right]+\frac{1}{2}g_{\mu\nu}\left(-\delta_{x}+\frac{1}{V}+\frac{2\sigma}{V}\right),
\end{multline}
we can derive the total variation of $K$ on
a torus,
\begin{multline}
\frac{\delta K(x)}{\delta g^{\mu\nu}(z)}=-\left[\nabla_{(\mu}G_{x}\nabla_{\nu)}K-\mathrm{tr}\right]+\\
+\left[\frac{2}{\mathring{V}}\mathring{\nabla}_{\mu}{\nabla}_{\nu}\left(\left(-\mathring{\Delta}\right)^{-1}{G}_x\right)-\mathrm{tr}\right]+\frac{1}{2}g_{\mu\nu}\left(\frac{2}{V}G_{x}+\frac{1}{V}K\right),
\end{multline}
(note no circle above $G_x$). Once again, all omitted free arguments here are $z$ and all derivatives are w.r.t.\ $z$. 
It is now easy to notice that on a flat torus
\[
    \mathring{\Phi}_{x,\mu\nu}=\left[\frac{2}{\mathring{V}}\mathring{\nabla}_{\mu}{\nabla}_{\nu}\left(\left(-\mathring{\Delta}\right)^{-1}\mathring{G}_x\right)-\mathrm{tr}\right]+h_{\mu\nu},\label{Phi on flat torus}
\]
where $h_{\mu\nu}$ is a holomorphic quadratic differential ($h_{zz}=\mathrm{const}$). By translation invariance, $h_{\mu\nu}$ has to be $x$-independent.

Considering that we know what the exact trace part of $\delta G^R/\delta g^{\mu\nu}$
should be from (\ref{eq: trace variation of G^R general}), all of
this adds up to the same variation of $G^{R}$ as on a sphere up to the uninteresting $h_{\mu\nu}$:
\begin{gather}
    \frac{\delta G^{R}(x)}{\delta g^{\mu\nu}(z)}=-\left[\nabla_\mu G_x\nabla_\nu G_x-\tr\right]+h_{\mu\nu}+\frac{1}{2}g_{\mu\nu}\left(-\frac{\delta_{x}}{4\pi}+\frac{2}{V}G_x\right). \label{torus variation}
\end{gather}
where
as usual the first term has to be understood as $\Phi_{\mu\nu}-(\nabla_\mu\nabla_\nu G_x-\tr)/4\pi$.

This implies that the final form of the stress tensor given in Proposition \ref{prop: Stress} remains correct with a $z_i$-independent $o_{\mu\nu}$. The question of whether $o_{\mu\nu}$ is $z_i$-independent on surfaces of higher genus as well remains open.


\end{document}